\newcolumntype{P}[1]{>{\centering\arraybackslash}p{#1}}
\def\reals{\mathbb{R}}
\def\minimize{\mathop{\rm minimize}\limits}
\def\st{\mathop{\rm subject\ to}}
\newtheorem{theorem}{Theorem}
\newtheorem{proposition}{Proposition}
\newcommand{\N}{\mathcal{N}}
\let\phi\varphi
\let\epsilon\varepsilon
\begin{document}

\title{Optimal semi-static hedging in illiquid markets}
\author{Teemu Pennanen\thanks{teemu.pennanen@kcl.ac.uk}, Udomsak Rakwongwan\thanks{udomsak.rakwongwan@kcl.ac.uk}}
\affil{Department of Mathematics, \\King's College London, \\Strand, London, WC2R 2LS, United Kingdom}

\maketitle

\begin{abstract}
We study indifference pricing of exotic derivatives by using hedging strategies that take static positions in quoted derivatives but trade the underlying and cash dynamically over time. We use real quotes that come with bid-ask spreads and finite quantities. Galerkin method and integration quadratures are used to approximate the hedging problem by a finite dimensional convex optimization problem which is solved by an interior point method. The techniques are extended also to situations where the underlying is subject to bid-ask spreads. As an illustration, we compute indifference prices of path-dependent options written on S\&P500 index. Semi-static hedging improves considerably on the purely static options strategy as well as dynamic trading without options. The indifference prices make good economic sense even in the presence of arbitrage opportunities that are found when the underlying is assumed perfectly liquid. When transaction costs are introduced, the arbitrage opportunities vanish but the indifference prices remain almost unchanged.
\end{abstract}
\newpage

\section{Introduction}

Unlike in complete markets where derivative prices are uniquely determined by replication arguments, in incomplete markets, quoted prices depend on subjective factors such as the agents' financial positions, risk preferences and views concerning future market developments. Such dependencies are consistently described by indifference pricing which can be viewed as an extension of replication arguments to the incomplete markets; see e.g.\ \cite{buh70,hn89,car9,ijs04} and the references therein. Extensions to illiquid markets and the corresponding duality theory has been studied in \cite{pen14} and \cite{pp18e}, respectively.
 
This paper develops computational techniques for utility-based semi-static hedging with a finite number of derivatives whose quotes have bid-ask spreads and finite quantities. The hedging strategies involve buy-and-hold positions in the derivatives while the underlying and cash are traded dynamically. We use a Galerkin method to approximate the hedging problem by a finite-dimensional convex optimization problem which is then numerically solved by an interior point method much like in \cite{apr18} in a purely static setting. The approach extends with minor modifications to situations where the dynamically traded underlying is also subject to bid ask spreads.

The techniques are illustrated by computing indifference prices of various path-dependent options (including knock-out, Asian and look-back options) on the S\&P500 index. As hedging instruments, we use exchange-traded puts and calls on the index. For the nearest maturities, one can find hundreds of options with bid and/or ask quotes. 
We find that semi-static hedging significantly improves on the hedges obtained by purely static or purely dynamic strategies. The semi-static hedging strategies provide good approximations of the payouts of the hedged derivatives and the corresponding spreads between seller's and buyer's prices are considerably tighter than those obtained with purely static or dynamic hedging. The computational approach applies to arbitrary utility functions and stochastic models that allow for numerical sampling.

Compared to the more traditional super/subhedging, indifference pricing is less sensitive to market imperfections and it makes good sense even in the presence of arbitrage. This was found a useful feature as the quotes on exchange traded options seem to often lead to arbitrage if the dynamically traded underlying is assumed perfectly liquid (as is the case in most models studied in the literature). The arbitrage opportunities vanish when moderate transaction costs on the underlying are introduced but the indifference prices remain almost unaffected. 

When the statically hedged options are discarded, the optimal investment strategy to maximize the expected exponential utility coincides with the classical Merton strategy. More surprisingly, the corresponding hedging strategies obtained with indifference pricing seem to coincide with the delta-hedging strategies for replication in complete market models.

Semi-static hedging has been actively studied in the recent literature but mainly under the assumption of perfect liquidity for both the static and dynamic instruments. Moreover, it is common to assume also that there exist quotes for a continuum of strikes as opposed to the finite number of strikes traded in real markets. Much of the research has focused on duality theory in a distributionally roust superhedging; see e.g.\ \cite{bhp13}. Guo and Obloj~\cite{go19} developed computational techniques for the martingale optimal transport problems by using discretization and interior point methods much like we do below. Their problem can be viewed as the dual of a semi-static superhedging problem with a continuum of strikes for statically held call options (which fixes the marginals of the martingale measures). Extensions of the duality theory of model-free semi-static superhedging to illiquid markets were given in the examples of \cite{pp19}. In the computational studies below, we find that with real finite-liquidity quotes for finitely many options, superhedging tends to give prices with very wide spreads.

The present paper is closely related to \cite{is06} and \cite{ijs08} that studied utility indifference pricing under semi-static trading. While they studied duality and asymptotics of indifference prices in a perfectly liquid continuous-time model, we focus on real illiquid markets and compute prices and hedging portfolios numerically.



The rest of this paper is organized as follows. Sections~\ref{sec:po} and \ref{sec:idp} describe the optimal hedging model and the corresponding indifference prices, respectively. Section~\ref{sec:numopt} presents the techniques employed in the numerical computation of optimal hedging and indifference pricing. Section~\ref{sec:illiquid} extends the techniques to markets with an illiquid underlying. Sections~\ref{sec:num} and \ref{sec:idpnum} present the numerical results obtained with S\&P500 derivatives.

\section{Semi-static optimisation}\label{sec:po}


Consider a finite set $J$ of quoted derivatives whose payouts are determined by the values of an underlying index $X$ at times $t=0,1,\ldots,T$. We assume that the derivatives are traded only at $t=0$ and they are held to maturity. The underlying can be traded at any $t=0,1,\ldots,T$. The cost of buying $x^j$ units of $j\in J$ is denoted by
\begin{equation*}
S^j_0(x^j) :=
\begin{cases}
s^j_a x^j & \text{if $x^j\ge 0$},\\
s^j_b x^j & \text{if $x^j\le 0$},
\end{cases}
\end{equation*}
where $s^j_b< s^j_a$ are the bid and ask prices of $j$. The quantities available at the best bid and ask quotes will be denoted by $q^j_b$ and $q^j_a$, respectively. This means that the position $x^j$ we take in asset $j$ has to lie in the interval $[-q^j_b,q^j_a]$. The payoff of $j\in J$ at time $t$ will be denoted by $P_t^j$. We assume for now that the underlying index is perfectly liquid and can be dynamically traded at price $X_t$, $t=0,\ldots,T$. More realistic markets will be considered in Section~\ref{sec:illiquid}.

Consider an agent with $w\in\reals$ units of initial cash and the liability to deliver $c_t$ units of cash at $t=1,\ldots,T$. In the applications below, $c_t$ will be the payout of an exotic option to be priced. We model the price process $X=(X_t)_{t=1}^T$, the payouts $p^j=(p_t^j)_{t=1}^T$ and the liability $c=(c_t)_{t=1}^T$ as adapted stochastic processes in a filtered probability space $(\Omega,\mathcal{F},(\mathcal{F}_t)_{t=1}^T,P)$. We will study the optimal investment problem
\begin{equation}\label{alm}\tag{SSP}
\begin{aligned}
&\minimize & & Ev\left(\sum_{t=1}^{T} [c_t-\sum_{j\in J}p_t^jx^j]-\sum_{t=0}^{T-1}z_t\Delta X_{t+1}\right) \ \text{over} \ x\in D, z\in\N\\ 
&\st & &\sum_{j\in J}  S^j_0 (x^j) \leq w,
\end{aligned}
\end{equation}
where 
\begin{equation*}
D:=\prod_{j\in J}[-q^j_b,q^j_a],
\end{equation*}
$\mathcal{N}$ is the linear space of adapted trading strategies $z=(z_t)_{t=1}^{T-1}$, and $v:\reals\to\reals$ is a {\em loss function} describing the investor's risk preferences; see e.g.\ \cite[Section~4.9]{fs11}. One may think of $u(c):=-v(-c)$ as a utility function so $v$ will be assumed nondecreasing and convex. The argument of $v$ is the unhedged part of the claims ${(c_t)}^{T}_{t=1}$, the last term being interpreted as the payout of a self-financing trading strategy in the underlying and cash. One could also include various margin requirements in the specification of the set $D$. 

It is clear that the optimum value and solutions of problem \eqref{alm} depend on
\begin{enumerate}
\item
the financial position described by the initial cash $w$ and liability $c$,
\item
the views concerning the future values of $X$, $p$ and $c$ described by the probabilistic model,
\item
the risk preferences described by the loss function $v$
\end{enumerate}
all of which are subjective. The effect of these factors on the optimal hedging strategies and the associated prices of $c$ will be studied below. It turns out that, if the claims $c$ are replicable, then the prices will be unique and independent of the subjectivities; see Theorem~\ref{thm:ab} below.

Another important feature of \eqref{alm} is that it is a {\em convex optimization problem}. Convexity is crucial in numerical solution of \eqref{alm} as well as in the mathematical analysis of the indifference prices.

\section{Indifference pricing}\label{sec:idp}

We shall denote the optimum value of \eqref{alm} by
\begin{multline*}
\varphi(w,c):=\inf_{x\in D,\ z\in\N}\bigg\{Ev\big(\sum_{t=1}^{T} [c_t-\sum_{j\in J}p_t^j x^j]-\sum_{t=1}^{T-1}\Delta X_{t+1} z_t\big)
\bigg|\, \sum_{j\in J} S^j_0 (x^j) \le w\bigg\}.
\end{multline*}
For an agent with financial position of $\bar w$ units of initial wealth and a liability of delivering a sequence $\bar c = (\bar c_t)_{t=1}^{T}$ of payments, the {\em indifference price} for selling a claim $c=(c_t)_{t=1}^T$ is given by
\begin{equation*}
\pi_s(\bar{w},\bar{c};c):=\inf\{w\in\reals\,|\,\varphi(\bar{w}+w,\bar{c}+c)\leq\varphi(\bar{w},\bar{c})\}.
\end{equation*}
This is the minimum price at which the agent could sell the claim $c$ without worsening her financial position as measured by the optimum value of \eqref{alm}. Analogously, the indifference price for buying $c$ is given by
\begin{equation*}
\pi_b(\bar{w},\bar{c};c):=\sup\{w\in\reals\,|\,\varphi(\bar{w}-w,\bar{c}-c)\leq\varphi(\bar{w},\bar{c})\}.
\end{equation*}

We shall compare the indifference prices with the superhedging and subhedging costs defined by
\begin{multline*}
\pi_{\sup}(c):=\inf_{x\in D,\ z\in\N}\bigg\{\sum_{j\in J} S^j_0 (x^j)\,\bigg|\, 
\sum_{t=1}^T \sum_{j\in J}p_t^j x^j +\sum_{t=1}^{T-1}\Delta X_{t+1} z_t-\sum_{t=1}^{T} c_t \ge 0\ P\text{-a.s.}\bigg\},
\end{multline*}
and
\begin{multline*}
\pi_{\inf}(c):=\sup_{x\in D,\ z\in\N}\bigg\{-\sum_{j\in J} S^j_0 (x^j)\,\bigg|\,
\sum_{t=1}^T \sum_{j\in J}p_t^j x^j +\sum_{t=1}^{T-1}\Delta X_{t+1} z_t+\sum_{t=1}^{T} c_t \ge 0\ P\text{-a.s.}\bigg\}.
\end{multline*}
The superhedging cost is the least cost of a superhedging portfolio while the subhedging cost is the greatest revenue one could get by entering position that superhedges the negative of $c$. Whereas the indifference prices of a claim depend on our financial position, views and risk preferences described by $(\bar w,\bar c)$, $P$ and $v$, respectively, the superhedging and subhedging costs are independent of such subjective factors. 

In situations where the quantities available at the best quotes are large enough to be nonbinding, the indifference prices lie between the superhedging and subhedging costs. Indeed, an application of \cite[Theorem~4.1]{pen14} to the present situation gives the following.

\begin{theorem}\label{thm:ab}
The function $\pi_s(\bar w,\bar c;\cdot)$ is convex, nondecreasing and $\pi_s(\bar w,\bar c;0)\le 0$. If there are no quantity constraints (or if they are not active), then $\pi_s(\bar w, \bar c;c) \le \pi_{\sup}(c)$. If in addition, $\pi_s(\bar w, \bar c;0)=0$, then
\[
\pi_{\inf}(c)\le\pi_l(\bar w, \bar c;c) \le \pi_s(\bar w, \bar c;c) \le \pi_{\sup}(c)\quad\forall c\in L^0
\]
with equalities throughout if $s^b=s^a$ and $c$ is replicable.
\end{theorem}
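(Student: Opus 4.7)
The proof would derive all the claims from the joint convexity of $\varphi(w,c)$ in $(w,c)$, its monotonicity in $c$, and a sign-symmetry linking the seller's and buyer's prices. The argument of $v$ in \eqref{alm} is jointly affine in $(c,x,z)$, $v$ is convex and nondecreasing, $D$ is convex, and the budget constraint is jointly convex in $(w,x)$ because each $S^j_0$ is sublinear (convex and positively homogeneous with $S^j_0(0)=0$). Hence $\varphi$ is jointly convex in $(w,c)$ and nondecreasing in $c$. From this, the three opening claims about $\pi_s(\bar w,\bar c;\cdot)$ fall directly out of the definition: convexity by taking a convex combination of $w_i$ witnessing $\varphi(\bar w+w_i,\bar c+c_i)\le\varphi(\bar w,\bar c)$; monotonicity from the monotonicity of $\varphi$ in $c$; and $\pi_s(\bar w,\bar c;0)\le 0$ by plugging in $w=0$.

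For the bound $\pi_s(\bar w,\bar c;c)\le\pi_{\sup}(c)$, for any $w>\pi_{\sup}(c)$ I would pick a superhedging pair $(x^0,z^0)$ with $\sum_j S^j_0(x^{0,j})\le w$ and, given any $(x,z)$ feasible for $\varphi(\bar w,\bar c)$, consider the concatenation $(x+x^0,z+z^0)$. Sublinearity of $S^j_0$ gives $\sum_j S^j_0(x^j+x^{0,j})\le\bar w+w$; the hypothesis of no active quantity constraints gives $x+x^0\in D$; and the superhedging inequality makes the argument of $v$ pointwise dominated by that of $(x,z)$, so monotonicity of $v$ yields $\varphi(\bar w+w,\bar c+c)\le\varphi(\bar w,\bar c)$, whence $\pi_s(\bar w,\bar c;c)\le w$ and letting $w\downarrow\pi_{\sup}(c)$ finishes this step. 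The main obstacle is exactly this concatenation: without the quantity-constraints hypothesis the added positions could push $x+x^0$ out of $D$, which is precisely why that assumption enters.

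For the remaining chain I would use two elementary identities that come straight from the definitions by sign-reversal: $\pi_b(\bar w,\bar c;c)=-\pi_s(\bar w,\bar c;-c)$ (substitute $w\mapsto -w$) and $\pi_{\inf}(c)=-\pi_{\sup}(-c)$ (reverse the sign of the superhedging inequality, noting the feasible sets coincide). Combined with the hypothesis $\pi_s(\bar w,\bar c;0)=0$ and convexity of $\pi_s(\bar w,\bar c;\cdot)$, one gets
\[
0=\pi_s(\bar w,\bar c;0)\le\tfrac12\pi_s(\bar w,\bar c;c)+\tfrac12\pi_s(\bar w,\bar c;-c),
\]
which rearranges to $\pi_b(\bar w,\bar c;c)\le\pi_s(\bar w,\bar c;c)$. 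Applying the superhedging bound just proved with $-c$ in place of $c$ and invoking the second identity yields $\pi_{\inf}(c)\le\pi_b(\bar w,\bar c;c)$. Finally, when $s^b=s^a$ each $S^j_0$ is linear, and a replicating pair $(x^*,z^*)$ for $c$ together with its negative furnishes both a super- and a sub-hedging strategy at the common cost $\sum_j s^j x^{*j}$, forcing $\pi_{\inf}(c)=\pi_{\sup}(c)$ and collapsing the chain to equalities.
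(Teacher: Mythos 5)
Your argument is correct, but it takes a different route from the paper: the paper gives no self-contained proof at all, instead obtaining the theorem as a direct application of a general result on indifference pricing in illiquid markets (Theorem~4.1 of Pennanen's 2014 paper cited as \cite{pen14}). What you have done is reconstruct, from first principles, essentially the argument that the cited theorem encapsulates: joint convexity and monotonicity of the value function $\varphi$ (via sublinearity of the cost functions $S^j_0$, convexity of $D$, and convexity/monotonicity of $v$) yield the convexity, monotonicity and the bound $\pi_s(\bar w,\bar c;0)\le 0$; the concatenation of a feasible strategy with a superhedging pair, which uses subadditivity of $S^j_0$ and monotonicity of $v$, gives $\pi_s\le\pi_{\sup}$, and you correctly isolate the point where the no-active-quantity-constraints hypothesis is indispensable ($x+x^0$ must stay in $D$); and the sign-reversal identities $\pi_b(\bar w,\bar c;c)=-\pi_s(\bar w,\bar c;-c)$ and $\pi_{\inf}(c)=-\pi_{\sup}(-c)$, combined with $\pi_s(\bar w,\bar c;0)=0$ and convexity, give the full chain, with the replication-plus-linearity argument collapsing it to equalities. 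Your approach buys a transparent, elementary and self-contained verification tailored to the discrete-time, finitely-many-instruments setting of \eqref{alm}; the paper's approach buys brevity and places the result inside a general duality framework that also covers more abstract claim spaces and market models. The only points worth tightening are cosmetic: one should note that the concatenated strategy $z+z^0$ stays in $\N$ because $\N$ is a linear space, and that the hypothesis $\pi_s(\bar w,\bar c;0)=0$ implicitly rules out degenerate cases such as $\varphi(\bar w,\bar c)=+\infty$; neither affects the validity of your argument.
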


As long as one can (numerically) compute the optimum values $\varphi(w,c)$ for given $(w,c)$, the indifference prices can be computed by a simple line search with respect to the price. This can, however, be computationally expensive. If cash is perfectly liquid and the interest rate is zero, the indifference prices can be expressed in terms of two optimization problems as follows.

\begin{proposition}
If cash is perfectly liquid with zero interest rate, the indifference prices for buying and selling for an agent with exponential risk measure can be expressed as,
$$\pi_b(\bar w, \bar c;c) =\frac{\bar{w}}{\lambda}\log\bigg(\frac{\varphi(\bar w,\bar c)}{\varphi(\bar w,\bar c-c)}\bigg),$$
$$\pi_s(\bar w, \bar c;c) =\frac{\bar{w}}{\lambda}\log \bigg(\frac{ \varphi(\bar w,\bar c+c)}{ \varphi(\bar w,\bar c)}\bigg),$$
where $\bar{w}$ is an initial wealth, and $\lambda$ is a risk aversion factor.
\end{proposition}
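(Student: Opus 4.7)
The plan is to exploit the translation identity of the exponential loss together with the free-cash structure that perfect liquidity at zero interest rate provides, so that the nonlinear inequalities defining $\pi_s$ and $\pi_b$ collapse into simple algebraic inversions. First I would rewrite \eqref{alm} with an explicit cash holding $y_0\in\reals$: the liquid cash is subtracted from the hedging error in the argument of $v$ and simultaneously enters the budget constraint as $\sum_{j\in J}S^j_0(x^j)+y_0\le w$. This is the natural embedding of perfectly liquid zero-interest cash into the model of Section~\ref{sec:po}.

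For an exponential loss $v(x)=e^{\lambda x}$ with $\lambda>0$, the identity $v(x-y_0)=e^{-\lambda y_0}v(x)$ turns $y_0$ into a multiplicative factor in the objective. Since $e^{-\lambda y_0}$ is strictly decreasing in $y_0$, the budget constraint binds at $y_0=w-\sum_j S^j_0(x^j)$. Substituting back and pulling the factor $e^{-\lambda w}$ outside the infimum yields the separability
\begin{equation*}
\varphi(w,c) \;=\; e^{-\lambda w}\,\psi(c),
\end{equation*}
where $\psi(c)$ is the infimum of $Ev\bigl(\sum_t[c_t-\sum_j p_t^j x^j]-\sum_t z_t\Delta X_{t+1}+\sum_j S^j_0(x^j)\bigr)$ over $x\in D$ and $z\in\N$. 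Crucially, $\psi$ depends only on the claim stream $c$, not on the initial wealth; this cash-invariance is the standard feature of entropic/exponential indifference pricing.

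Plugging the separability into the definition of $\pi_s(\bar w,\bar c;c)$, the inequality $\varphi(\bar w+w,\bar c+c)\le\varphi(\bar w,\bar c)$ reduces to $e^{-\lambda w}\psi(\bar c+c)\le\psi(\bar c)$, which solves to $w\ge\tfrac{1}{\lambda}\log[\psi(\bar c+c)/\psi(\bar c)]$. The common factor $e^{-\lambda\bar w}$ cancels from numerator and denominator so the ratio can be written as $\varphi(\bar w,\bar c+c)/\varphi(\bar w,\bar c)$, recovering the seller's formula; the buyer's formula follows by the symmetric argument with $w$ and $c$ replaced by $-w$ and $-c$. The only delicate step is the very first one: making precise that the version of \eqref{alm} stated in the paper, in which cash does not appear explicitly, is equivalent to the augmented problem with $y_0$ as a decision variable. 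After that the proof is a short algebraic manipulation, subject only to mild regularity ensuring $\varphi$ stays strictly positive so the logarithms are well defined. I note that my derivation produces the prefactor $1/\lambda$ rather than $\bar w/\lambda$; the $\bar w$ in the stated formulas appears to be a typographical slip.
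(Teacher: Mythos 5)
Your route is essentially the paper's: use the translation property of the exponential loss under perfectly liquid, zero-interest cash to obtain $\varphi(\bar w+w,\bar c+c)=\varphi(\bar w,\bar c+c)\,e^{-\mathrm{const}\cdot w}$, then invert the inequality defining $\pi_s$ and $\pi_b$. Your explicit cash variable $y_0$ is in fact a more careful rendering of the step the paper performs implicitly (its second equality, where relaxing the budget from $\bar w$ to $\bar w+w$ is traded for a shift $-w$ inside the loss), and the full separability $\varphi(w,c)=e^{-\mathrm{const}\cdot w}\psi(c)$ is a harmless strengthening of the one-parameter translation identity the paper actually uses.

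The genuine defect is the prefactor, and your diagnosis of it. The paper's exponential loss is not $v(x)=e^{\lambda x}$ but $v(c)=e^{\lambda c/w}$ with $w$ the initial wealth (Section~\ref{sec:qvp}); equivalently, the absolute risk aversion is $\lambda/\bar w$, and the paper's own computation carries the factor $\exp\bigl(\tfrac{\lambda}{\bar w}(\cdots)\bigr)$ with the scaling frozen at the reference wealth $\bar w$ even when the budget is $\bar w\pm w$. With this normalization the translation factor is $e^{-\lambda w/\bar w}$ and the inversion yields exactly $\tfrac{\bar w}{\lambda}\log(\cdot)$; the $\bar w$ in the stated formulas is therefore not a typographical slip, and as written your argument proves a different statement, which becomes the proposition only after replacing $\lambda$ by $\lambda/\bar w$ throughout. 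A smaller point: the strict positivity of $\varphi$ that you defer to ``mild regularity'' deserves a word (with exponential loss $\varphi\ge 0$ automatically; one needs $\varphi>0$, i.e.\ that the terminal shortfall cannot be pushed to $-\infty$ almost surely by the dynamic positions), a point the paper also leaves implicit.
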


\begin{proof}
By definition,
\begin{equation*}
\begin{split}
\varphi(\bar{w}+w,\bar{c}+c)=&\inf_{x\in D,\ z\in\N}\bigg\{E\exp(\frac{\lambda}{\bar{w}}(\sum_{t=1}^{T} [\bar{c}_t+c_t-\sum_{j\in J}p^j_t x^j]-\sum_{t=1}^{T-1}\Delta X_{t+1} z_t))\,\\
&\qquad\qquad\bigg|\, \sum_{j\in J}  S^j_0 (x^j) \leq \bar{w}+w\bigg\},\\
=&\inf_{x\in D,\ z\in\N}\bigg\{E\exp(\frac{\lambda}{\bar{w}}(\sum_{t=1}^{T} [\bar{c}_t+c_t-\sum_{j\in J}p^j_t x^j]-w-\sum_{t=1}^{T-1}\Delta X_{t+1} z_t))\\
&\qquad\qquad\bigg|\, \sum_{j\in J}  S^j_0 (x^j) \leq \bar{w}\bigg\},\\
=&\varphi(\bar{w},\bar{c}+c)\exp(\frac{-\lambda w}{\bar{w}}).
\end{split}
\end{equation*}
Thus,
\begin{equation*}
\begin{split}
\pi_b(\bar{w},\bar{c};c)=&\inf\{w\,|\,\varphi(\bar{w}-w,\bar{c}-c)\leq\varphi(\bar{w},\bar{c})\},\\
=&\inf\{w\,|\,\varphi(\bar{w},\bar{c}-c)\exp(\frac{\lambda w}{\bar{w}})\leq\varphi(\bar{w},\bar{c})\},\\
=&\frac{\bar{w}}{\lambda}\log\bigg(\frac{\varphi(\bar w,\bar c)}{\varphi(\bar w,\bar c-c)}\bigg),\\
\end{split}
\end{equation*}
and 
\begin{equation*}
\begin{split}
\pi_s(\bar{w},\bar{c};c)=&\inf\{w\,|\,\varphi(\bar{w}+w,\bar{c}+c)\leq\varphi(\bar{w},\bar{c})\},\\
=&\inf\{w\,|\,\varphi(\bar{w},\bar{c}+c)\exp(\frac{-\lambda w}{\bar{w}})\leq\varphi(\bar{w},\bar{c})\},\\
=&\frac{\bar{w}}{\lambda}\log \bigg(\frac{ \varphi(\bar w,\bar c+c)}{ \varphi(\bar w,\bar c)}\bigg),
\end{split}
\end{equation*}
which completes the proof.
\end{proof}

\section{Numerical optimization} \label{sec:numopt}

We assume from now on that the derivative and liability payouts $p^j$ and $c$, respectively, are adapted to the filtration generated by the underlying $X$. This clearly holds when $p^j$ and $c$ are contingent claims on $X$.

\subsection{Galerkin method} \label{galerkinSSP}

To solve \eqref{alm}, we need to optimize the dynamic part $z$ over the infinite-dimensional space $\N$ of adapted stochastic processes. We will employ the {\em Galerkin method} where one optimizes $z$ only over the finite-dimensional subspace $\N^N\subset\N$ spanned by the simple processes $z^{s,n}\in\N$ of the form
\begin{equation*}
z^{s,n}_t(\omega):=
\begin{cases}
1 & \text{if $t=s$, $X_t(\omega)\in [K_n,K_{n+1})$},\\
0 & \text{otherwise},
\end{cases}
\end{equation*}
where $s=1,2,\ldots,T-1$ and $K_n$, $n=1,2,\ldots,N_s$ are the strikes of the quoted options with maturity $s$, while $K_0=0$ and $K_{N_s+1}=+\infty$. The dimension of the linear span $\N^N$ is thus $N=\prod_{s=1}^{T-1}(N_s+1)$. Clearly, each $z^{s,n}_t$ is adapted to the filtration generated by $X$ so indeed, $\N^N\subset\N$. The linear span $\N^N$ consists of simple processes that that are constant between consecutive strikes.

\subsection{Integration quadrature} \label{Integration quadrature}

Since the filtration $(\mathcal{F}_t)_{t=0}^T$ is generated by $X$, the Doob-Dynkin lemma implies that the random variables $c_t$ and $p_t$ are functions of $X^t=(X_1,X_2,\ldots,X_t)$. The objective of \eqref{alm} can be written as 
$$Ef(x,z)=\int_{\reals_+^T} f(x,z(X),X) \phi (X)dX,$$
where $\phi $ is the density function of X, and
\[
f(x,z(X),X) := v\left(\sum_{t=1}^{T} [c_t(X^t)-\sum_{j\in J}p_t^j (X^t) x^j]-\sum_{t=1}^{T-1}\Delta X_{t+1} z_t (X_t)\right).
\]

We will approximate the multivariate integral by an integration quadrature:
\[
\int_{\reals_+^T} f(x,z(X),X) \phi (X)dX \approx \sum_{i=1}^{M} w_i f(x,z(X^{(i)}),X^{(i)}) \phi(X^{(i)}),
\]
where $M$ is the number of the quadrature points, $X^{(i)}$ are the quadrature points and $w_i$ are the corresponding weights.

There are many possible choices for the integration quadrature. In this study, we take $X^{(i)}=(X^{(i)}_t)_{t=1}^T$ where $X^{(i)}_t$ are the strikes at maturity $t$. The corresponding weights $w_i$ will be the volumes of the hyper cubes defined by the consecutive strikes. This choice of quadrature points yields fairly accurate results because the portfolio payout depends linearly on the index value between two strikes. In addition, the probability that the index is smaller than the smallest strike or bigger than the biggest strike is very small.

\subsection{Interior point method} \label{interiorSSP}

The approximate problem obtained with the Galerkin method and the integration quadrature, problem \eqref{alm} becomes a finite-dimensional convex optimization problem with finitely many constraints. It can be written as
\begin{align*}
  &\minimize\quad\sum_{i=1}^M v\left(\sum_{t=1}^{T} [c_t(X^t)-\sum_{j\in J}p_t^j (X^t) x^j]-\sum_{t=1}^{T-1}\Delta X_{t+1} z_t (X_t)\right)w_i\phi(X^i) \\
  &\text{over}\quad x\in D,\ z\in\N^N \\
&\st\quad \sum_{j\in J}  S^j_0 (x^j) \leq w.
\end{align*}
This is a finite-dimensional convex optimization problem that can be solved numerically e.g.\ by interior-point methods. In this study, we use the exponential utility so the problem can be written as a conic exponential optimization problem and solved using the interior-point solver of MOSEK~\cite{mosek}. Numerical results are given in Section~\ref{sec:num} below.

\section{Extension to illiquid underlying}\label{sec:illiquid}

Up to now, we have assumed that the underlying index is a perfectly liquid asset that can bought and sold at price $X$. The same assumption is made in most of the literature on semi-static hedging but from a practical point of view, this is not quite realistic. This section considers a more realistic variant of \eqref{alm} where the index is subject to a transaction costs, or equivalently, a constant bid-ask spread. More precisely, we assume that an agent needs to pay a $\delta$\% transaction cost to buy or sell the index at time $t=1,2,\ldots,T-1$, and the index is liquid at $T$. The semi-static hedging problem can then be written as
\begin{equation*} 
\begin{aligned}
&\minimize\quad& &Ev\left(\sum_{t=1}^{T} [c_t-\sum_{j\in J}p_t^j x^j]+\sum_{t=1}^T S_t(\Delta z_t)\right) \quad\text{over} \ x\in D, z \in\N\\ 
&\st\quad& &\sum_{j\in J}  S^j_0 (x^j) \leq w,
\end{aligned}
\end{equation*}
where $\Delta z_t:=z_t-z_{t-1}$ is the number of the unit of the underlying bought at $t$ and
\begin{equation*}
S_t(\Delta z_t) :=
\begin{cases}
(1+\frac{\delta}{100})X_t \Delta z_t & \text{if $\Delta z_t\ge 0$},\\
(1-\frac{\delta}{100})X_t \Delta z_t & \text{if $\Delta z_t\le 0$}.
\end{cases}
\end{equation*}
Here and in what follows, $z_{-1}=z_T=0$. Note that if $\delta=0$,
$$\sum_{t=1}^T S_t(\Delta z_t)=\sum_{t=1}^T X_t\Delta z_t=-\sum_{t=1}^{T-1}\Delta X_{t+1}{z_t},$$
so the original model \eqref{alm} is a special case of the above.

To solve the problem numerically, we express the purchases $\Delta z$ as
\[
\Delta z_t=\Delta z_t^+ -\Delta z_t^- ,
\]
where $\Delta z_t^+, \Delta z_t^-\ge 0$ represent purchases and sales, respectively, of the underlying. Thus, the trading cost can be written as
\[
S_t(\Delta z_t) = (1+\frac{\delta}{100})X_t \Delta z^+_t -(1-\frac{\delta}{100})X_t \Delta z^-_t.
\]
We then can apply the Galerkin method to $\Delta z^+_t$ and  $\Delta z^-_t$ where the multipliers of the basis functions are restricted to be positive. The rest is similar to the numerical solution of \eqref{alm} described in Section~\ref{sec:numopt}. 


\section{An application to S\&P500 derivatives}\label{sec:num}

This section illustrates the presented models and techniques in the S\&P500 derivatives market with option quotes taken from Bloomberg. For the nearest maturities, there are hundreds of exchange traded puts and calls whose quotes come with bid-ask spreads and finite quantities. The resulting optimization and pricing problems are then solved using the techniques described in Sections~\ref{sec:numopt}--\ref{sec:illiquid}.

We start by finding optimal portfolios in the quoted derivatives when assuming that the liability $c$ in Problem~\eqref{alm} is zero. We study the dependence of the optimal solution on the risk preferences as well as on the distribution of the underlying, both of which are highly subjective components of the model. Building on the optimization model, we then compute indifference prices for path-dependent derivatives namely a knock-out call option, an Asian call option, look-back call options and a look-back digital option. We compare the optimized portfolios and indifference prices obtained by semi-static hedging with different transaction costs to those obtained by purely dynamic hedging without options.

\subsection{Quotes, views, and preferences}\label{sec:qvp}

We use quotes for S\&P500 index options with maturities 21 April 2017 and 19 May 2017. The strikes of the options range from 1500 to 2500. The quotes were obtained from Bloomberg on 21 March 2017 at 3:00:00 PM when the value of the S\&P500 index was $2360$. All quotes come with bid ask spreads and finite quantities. Table~\ref{table:BlooembergQuotes} gives an example of quotes on put and call options written on the S\&P500 index available on 21 March 2017 at 15:00:00 expiring on 21 April 2017 and 19 May 2017. The index value was 2,360 at the time. The bid and ask prices shown in the table are per one option, whereas the available quantities are given in terms of a lot size which is 100. For example, the cost of buying or selling a call option with strike 2300 expiring on 5/19/2017 are 81.80 and 79.50, and there are 51 contracts (5100 options) and 48 contracts (4800 options) available for buying and selling respectively.

Call options are more liquid at lower strikes. One can find quotes for call options whose strikes are as low as 500, whereas the lowest strike for put options available in the market is 1,555. For the two nearest maturities, one can find quotes for 678 options whose strikes range from 1,500 to 2,500 with 5 dollar increments. 

\begin{table}[h!]
\centering
 \resizebox{0.95\textwidth}{!}{  
\begin{tabular}{l|c|c|c|c|c}
 \hline 
 Ticker & Type & Bid quantity & Bid price & Ask price & Ask quantity \\
 \hline\hline
 SPX US 5/19/2017 C2300 Index & Call & 48
 & 79.5 & 81.8 & 51\\
 SPX US 5/19/2017 P2300 Index & Put & 182 & 22.6 & 24 & 376\\
 SPX US 4/21/2017 C2370 Index & Call & 300 & 18.7 & 20.3 & 273\\
SPX US 4/21/2017 P2370 Index & Put & 275 & 28.6 & 30.5 & 322\\
 \hline
\end{tabular}
}
\captionsetup{width=1\textwidth}
\caption{Market quotes extracted from Bloomberg on 21 March 2017 at 15:00:00 for put and call options expiring on 21 April 2017 and 19 May 2017.}
\label{table:BlooembergQuotes}
\end{table}

In the applications below, we assume zero interest on cash. In practice, the index is not tradable, but one can trade exchange-traded funds, ETFs, which are securities that track the index. An example of a fairly liquid ETF which efficiently tracks the S\&P500 index is the SPY is issued by State Street Global Advisors.

We model the logarithm of the S\&P500 index by a variance gamma process, obtained by evaluating Brownian motion with drift $\theta$ and volatility $\sigma$ at a random time change given by a gamma process with a unit mean rate and a variance rate $\nu$; see~\cite{mcc98} and~\cite{testt}. The parameters $\theta$ and $\nu$ provide control over skewness and kurtosis, respectively. As a base case, we use the parameter values given in Table \ref{table:parameters}. The parameter $\theta$ is assumed to be zero, whereas the parameters $\sigma$, and $\nu$ are estimated using 10-year historical daily data. The effect of varying the parameters will be studied later on. The initial wealth $w$ is $100,000$ USD and for now, the claim $c_t$ is assumed to be zero for all $t$.

As for risk preferences, we use exponential loss function
\begin{equation*}
v(c) = e^{\lambda c/w},
\end{equation*}
where $w$ is the initial wealth and $\lambda>0$ is a risk aversion parameter. It should be noted that, in general, the net position at maturity can take both positive as well as negative values which prevents the use of utility functions with constant relative risk aversion.

\begin{table}[h!]
  \centering
  \begin{tabular}{c|c|c|c}
\hline
    $\lambda$&$\theta$ & $\sigma$ & $\nu$\\
\hline
    2&0 &   0.1206 & 0.0031  \\
\hline
   \end{tabular}
  \captionsetup{width=1\textwidth}
  \caption{Base-case parameters including a risk aversion factor $\lambda$ and variance gamma parameters $\theta, \sigma$, and $\nu$ used to model the index value.}
  \label{table:parameters}
\end{table}

\subsection{Optimized strategies} \label{optimizedStrategies}

To simplify the presentation and to ease the extensive numerical computations on a relatively modest computational setup, we will study a two-period instance of semi-static hedging and pricing. With the available set of quotes options and the numerical procedure described in Section~\ref{sec:numopt}, there are over 1700 variables and 2700 constraints in the discretized optimization problem. In the quadrature approximation of the expected loss function there are over 160,000 points on the grid; see Section \ref{Integration quadrature}. The interior point solver of MOSEK takes on average of 650.40 seconds on a PC with Intel(R) Core(TM) i5-4690 CPU @ 3.50GHz processor and 16.00 GB memory.

Figure \ref{fig:optPortBaseSSP} represents the structure of the optimized semi-static strategy. The bars represent the optimal positions in the options, whereas the line plots show the positions in cash and the index taken at $t=1$ as functions of $X_1$. Figure \ref{fig:optPayoffBaseSSP} plots the payout of the portfolio as a function of $X_1$ and $X_2$. 

The portfolio enjoys higher profit if the index values at the first and second maturities are close to each other, while its loss is greater elsewhere. This makes sense as it is unlikely that the index value at the second maturity greatly deviates from that of the first maturity.

\begin{figure}[H]
  \centering
    \includegraphics[width=0.9\textwidth,height=0.45\textwidth]{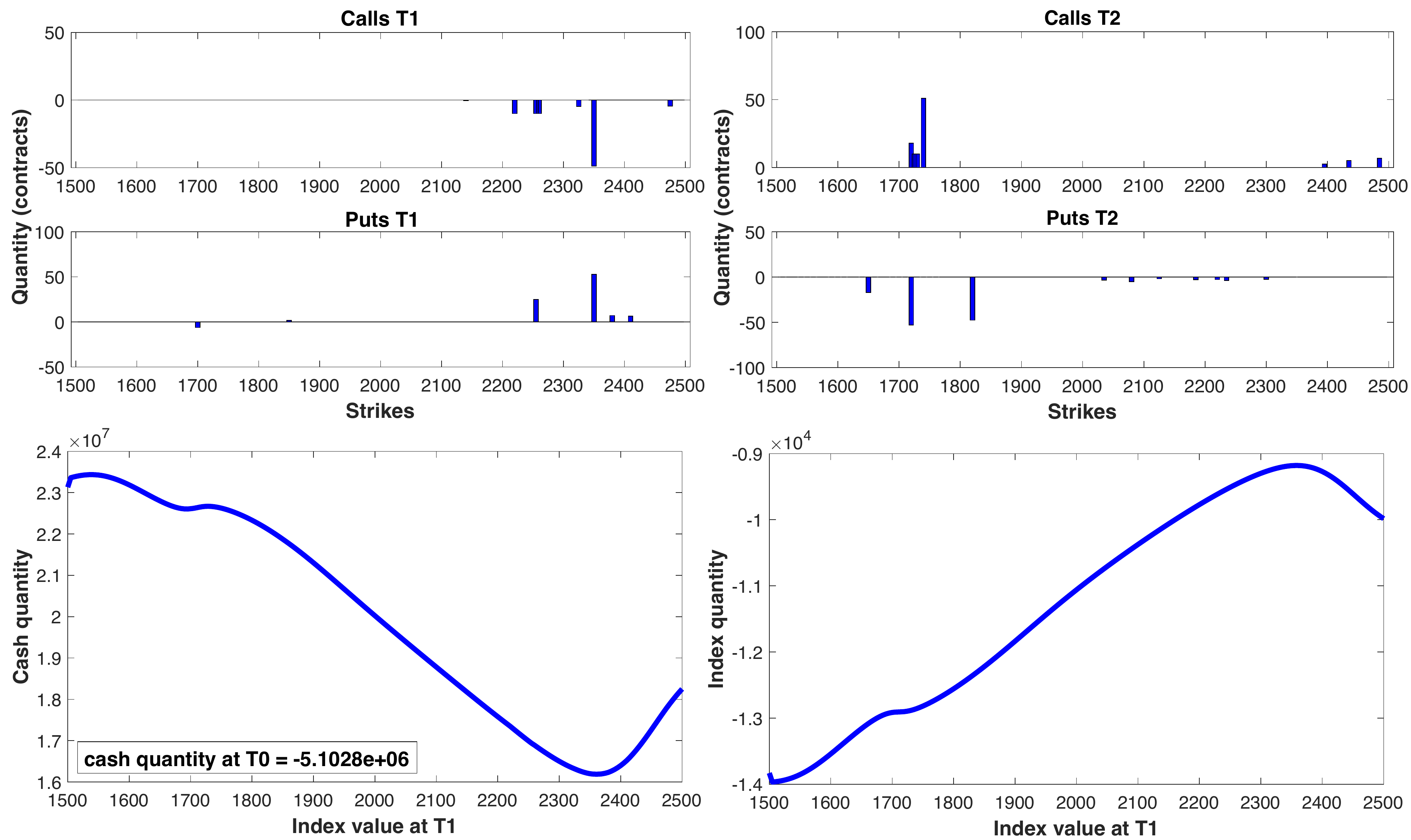}
    \caption{The structure of the optimized semi-static strategy where an index value is modelled by symmetric variance gamma with parameters given in Table \ref{table:parameters}.}\label{fig:optPortBaseSSP}
\end{figure}

\begin{figure}[H]
  \centering
    \includegraphics[width=0.5\textwidth,height=0.32\textwidth]{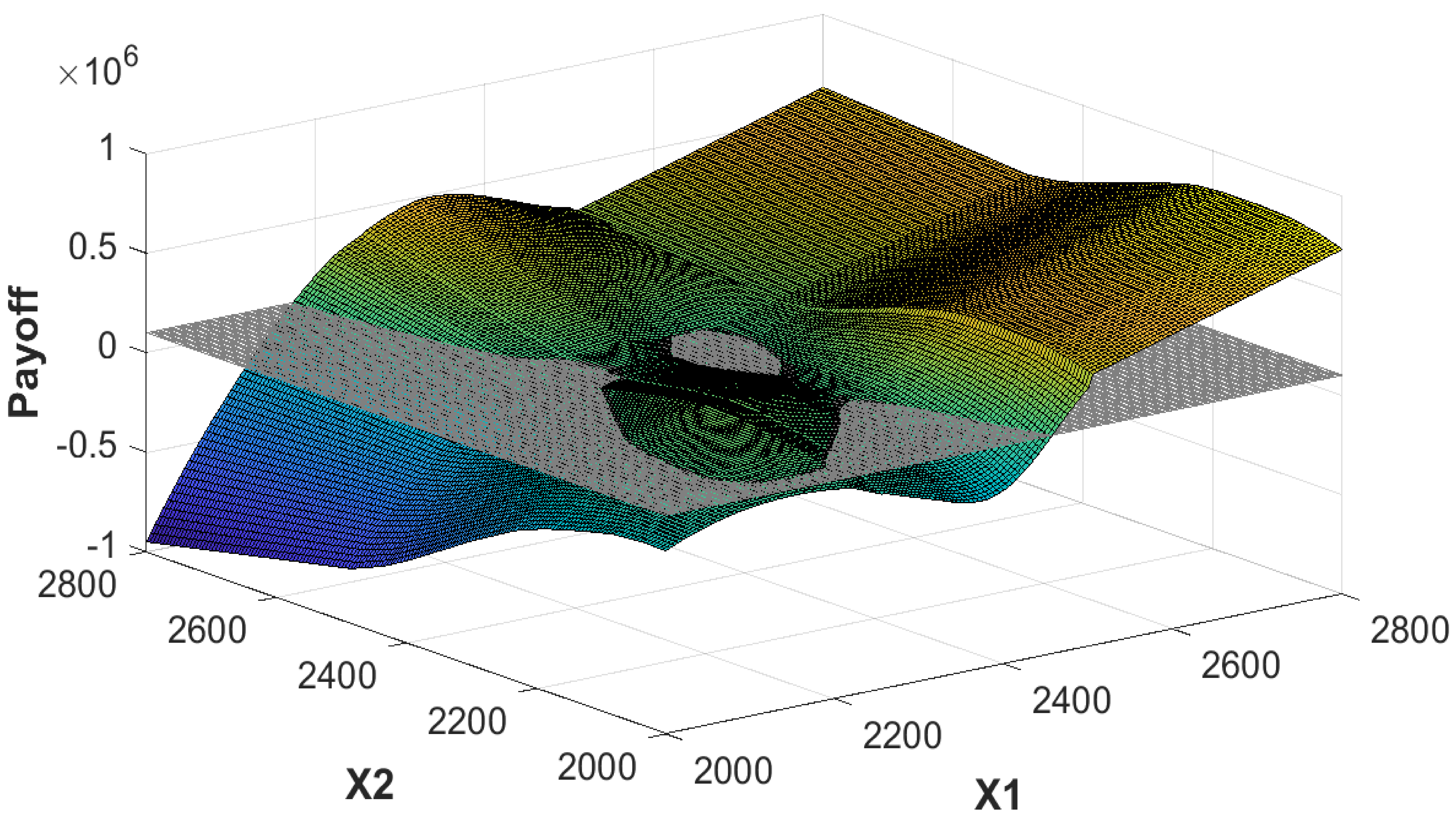}
    \caption{The payout of the optimal portfolio by semi-static optimization as a function of $X_1$ and $X_2$. The grey horizontal plane represents the initial wealth.}\label{fig:optPayoffBaseSSP}
\end{figure}

Figure \ref{fig:optPortRisk6SSP} represents the structure of the optimized portfolio obtained with risk aversion $\lambda =6$. The other parameters are as in Table \ref{table:parameters}. The payout of the portfolio is plotted in Figure \ref{fig:optPayoffDifferentRisk} (left) together with the payout of the optimal portfolio obtained with risk aversion $\lambda=2$. The right plot of Figure \ref{fig:optPayoffDifferentRisk} shows the kernel density estimates (using 1,000,000 out-of-sample simulated price paths) of the terminal wealth of the optimal portfolios obtained with risk aversion 2 and 6.

\begin{figure}[H]
  \centering
    \includegraphics[width=0.9\textwidth,height=0.45\textwidth]{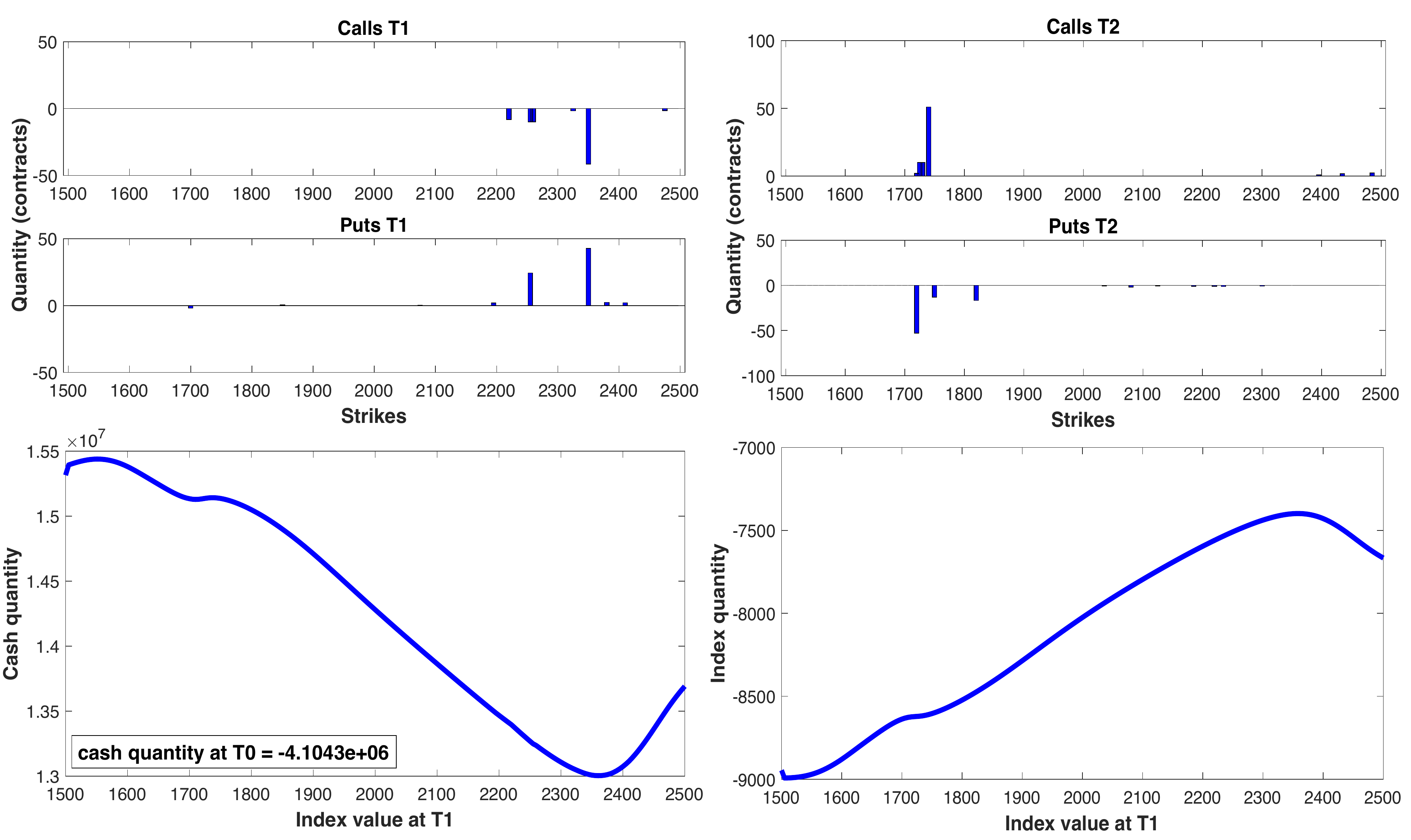}
    \caption{The structure of the optimized semi-static strategy obtained when the risk aversion increased from 2 to 6.}\label{fig:optPortRisk6SSP}
\end{figure}

\begin{figure}[H]
  \centering
    \includegraphics[width=0.9\textwidth,height=0.32\textwidth]{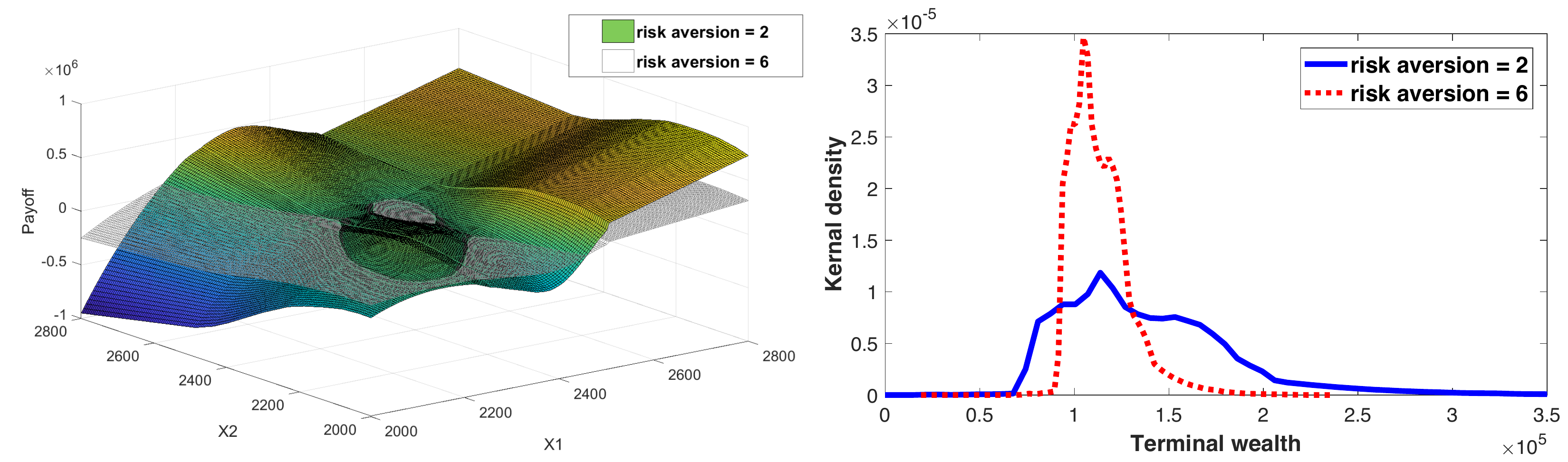}
    \caption{The payout as functions of $X_1$ and $X_2$ of the optimal portfolios obtained with risk aversions 2 and 6 (left). The kernel density estimates of the terminal wealths of the optimal portfolios obtained with risk aversions 2 and 6 using 1,000,000 out-of-sample simulated price paths (right).}\label{fig:optPayoffDifferentRisk}
\end{figure}

We see that the positions of the optimized portfolio obtained with risk aversion $\lambda=6$ are smaller than the ones with risk aversion $\lambda=2$. As expected, the payout of the portfolio obtained with higher risk aversion is less variable. Except for the scale, the shapes of the two kernel density plots look fairly similar, both exhibiting profits in roughly the same area. This makes sense as changing risk aversion does not change the view on the index value.

\begin{table}[h!]
  \centering
  \resizebox{0.55\textwidth}{!}{  
  \begin{tabular}{c|c|c|c}
\hline
    base case & $\lambda = 6$ & $\sigma =0.08$ &$\sigma = 0.2$\\
\hline
    -2.4404 &  -6.4832 & -2.7947 & -2.7895 \\
\hline
   \end{tabular}
   }
  \captionsetup{width=1\textwidth}
  \caption{Logarithms of optimum objective values obtained with different parameters in the semi-static optimization problem~\eqref{alm}. As one of the parameters changes, the others remain the same as in the base case.}
  \label{table:disutilitiesSSP}
\end{table}

Figure~\ref{fig:optPayoffDifferentSigmaSSP} plots the optimal payouts obtained with $\sigma=0.08$ (left) and $\sigma=0.2$ (right). Not surprisingly, increasing $\sigma$ results in a portfolio that gives higher payout further in the tails (a straddle). Table \ref{table:disutilitiesSSP} gives the logarithms of the optimum objective values when $\sigma=0.1206$, $\sigma=0.08$ and $\sigma=0.2$. Note that, since we use the exponential loss function, the logarithmic objective is the ``entropic risk measure'' which has units of cash. It can also be interpreted as the ``certainty equivalent''. The highest objective value is obtained with the base-case parameters which are estimated from historical data. This may be thought of as consistency of the market quotes and the market participants' views of the future behavior of the underlying. With a model that is inconsistent with the ``market views'', the available quotes may seem to offer profitable trading opportunities. 

\begin{figure}[H]
  \centering
    \includegraphics[width=0.9\textwidth,height=0.32\textwidth]{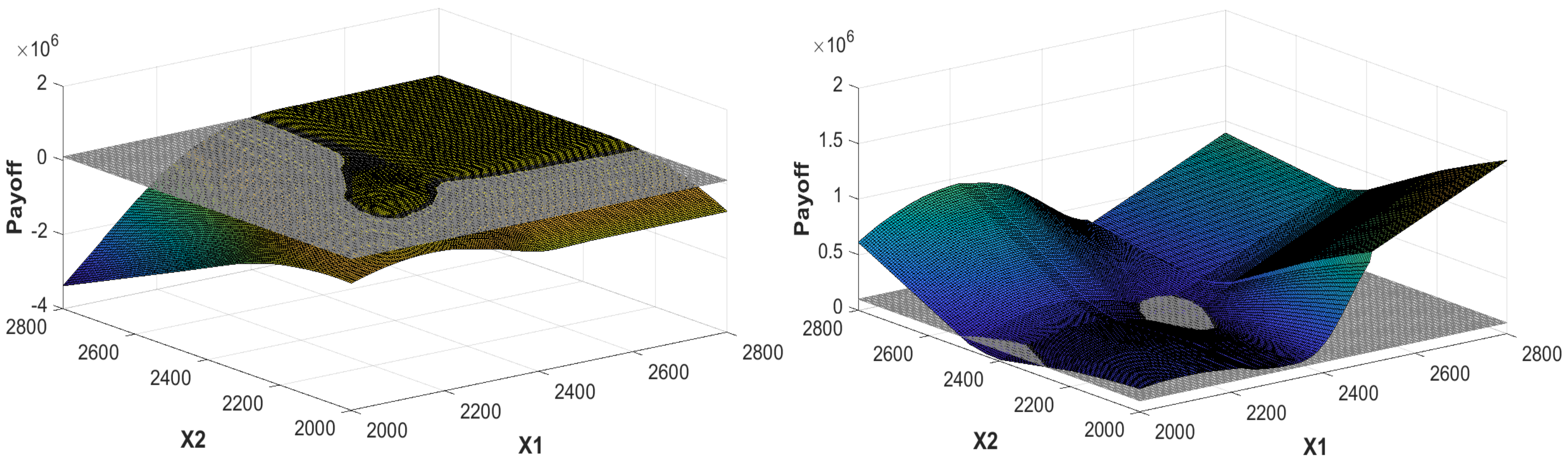}
    \caption{The  payout of the optimal portfolios by semi-static optimization as functions of $X_1$ and $X_2$, obtained with $\sigma=0.08$ (left) and $\sigma=0.2$ (right). The grey horizontal planes represent the initial wealth.}\label{fig:optPayoffDifferentSigmaSSP}
\end{figure}

\subsection{Arbitrage}\label{sec:arb}

We found that, with the quotes obtained from Bloomberg, there exists an arbitrage opportunity if the index can be traded without transaction costs. Due to the finite quantities at the best bid and ask quotes, however, the optimization model admits a bounded solution so that the pricing and hedging problems still make economic sense. 

To identify an arbitrage portfolio, we add the constraint 

\[
\sum_{t=1}^T \sum_{j\in J}p_t^j x^j+\sum_{t=1}^{T-1}\Delta X_{t+1} z_t\geq w \quad P\text{-a.s.}
\]
to problem \eqref{alm}. This means that the portfolio payout is at least the initial wealth in all scenarios. In numerical computations, we impose the constraint on all quadrature points. As the payout is a linear function between the strikes, the constraint will then hold everywhere. Figure \ref{fig:arbitragePortSSP} represents the structure of the arbitrage strategy and Figure~\ref{fig:arbitragePayoffSSP} plots the corresponding payout. The solution uses both static and dynamic trading and achieves a net payout that never falls below the initial wealth but is likely to end up strictly higher.

\begin{figure}[H]
  \centering
  \includegraphics[width=0.9\textwidth,height=0.45\textwidth]{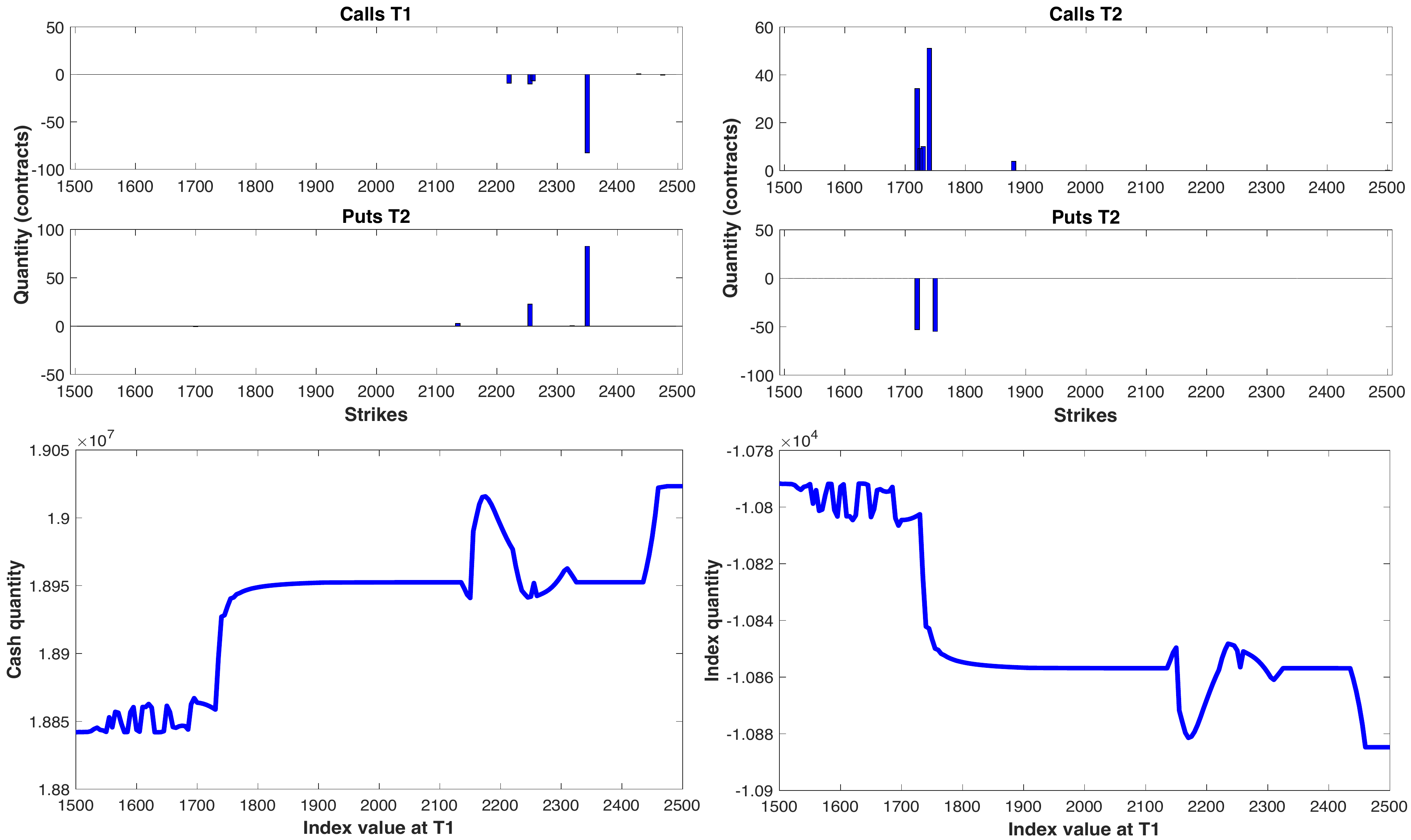}
  \caption{The structure of the arbitrage strategy}\label{fig:arbitragePortSSP}
\end{figure}

\begin{figure}[H]
  \centering
  \includegraphics[width=0.5\textwidth,height=0.32\textwidth]{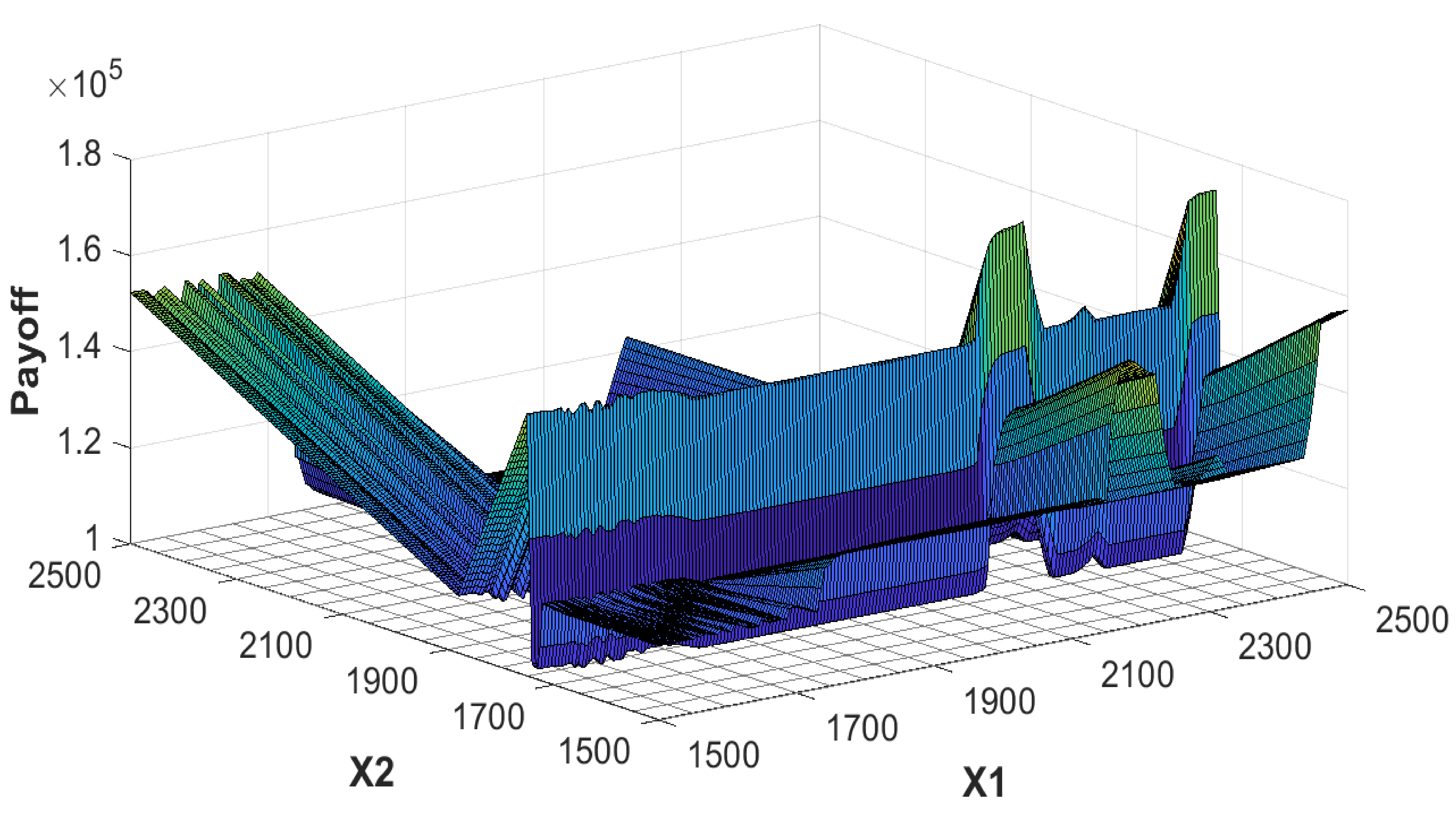}
    \caption{The  payout of the arbitrage portfolio as a function of $X_1$ and $X_2$}\label{fig:arbitragePayoffSSP}
\end{figure}

\subsection{Semi-static problem with transaction costs}

We will now study the effect of a bid-ask spread on the dynamically traded underlying. Figure~\ref{fig:optPortBaseSSPSpread} illustrates the structure of the optimal solution when the proportional transaction cost is $\delta=0.1$. The optimized options portfolio is sparser than the one obtained with perfectly liquid underlying. In addition, the quantities traded are smaller in the options as well as the index. A kernel density plot of the net payout is given in Figure~\ref{fig:optPayoffBaseSSPSpread}.

\begin{figure}[H]
  \centering
  \includegraphics[width=0.9\textwidth,height=0.45\textwidth]{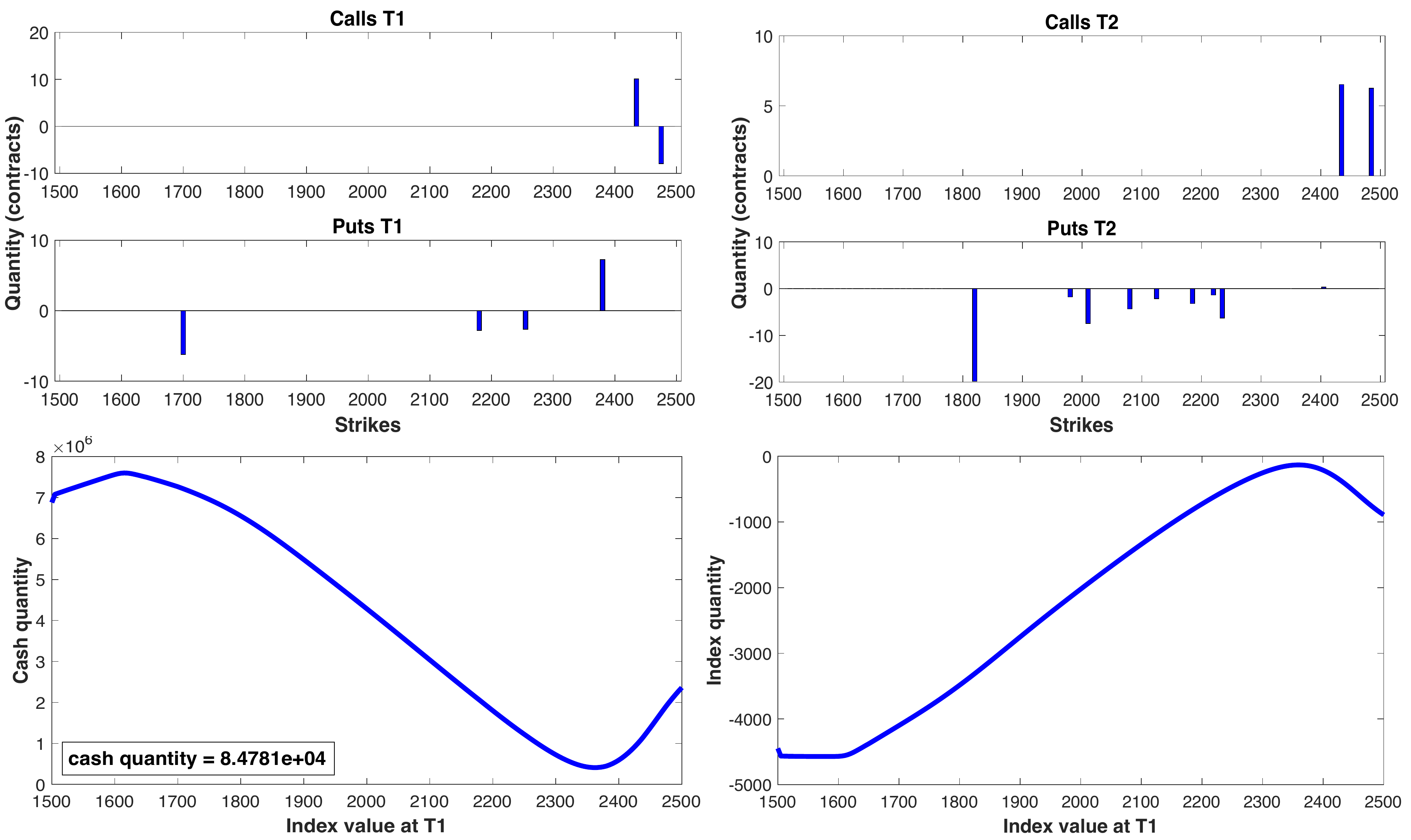}
    \caption{The structure of the optimal portfolio by semi-static optimization with 0.1\% transaction cost}\label{fig:optPortBaseSSPSpread}
\end{figure}
\begin{figure}[H]
  \centering
    \includegraphics[width=0.5\textwidth,height=0.32\textwidth]{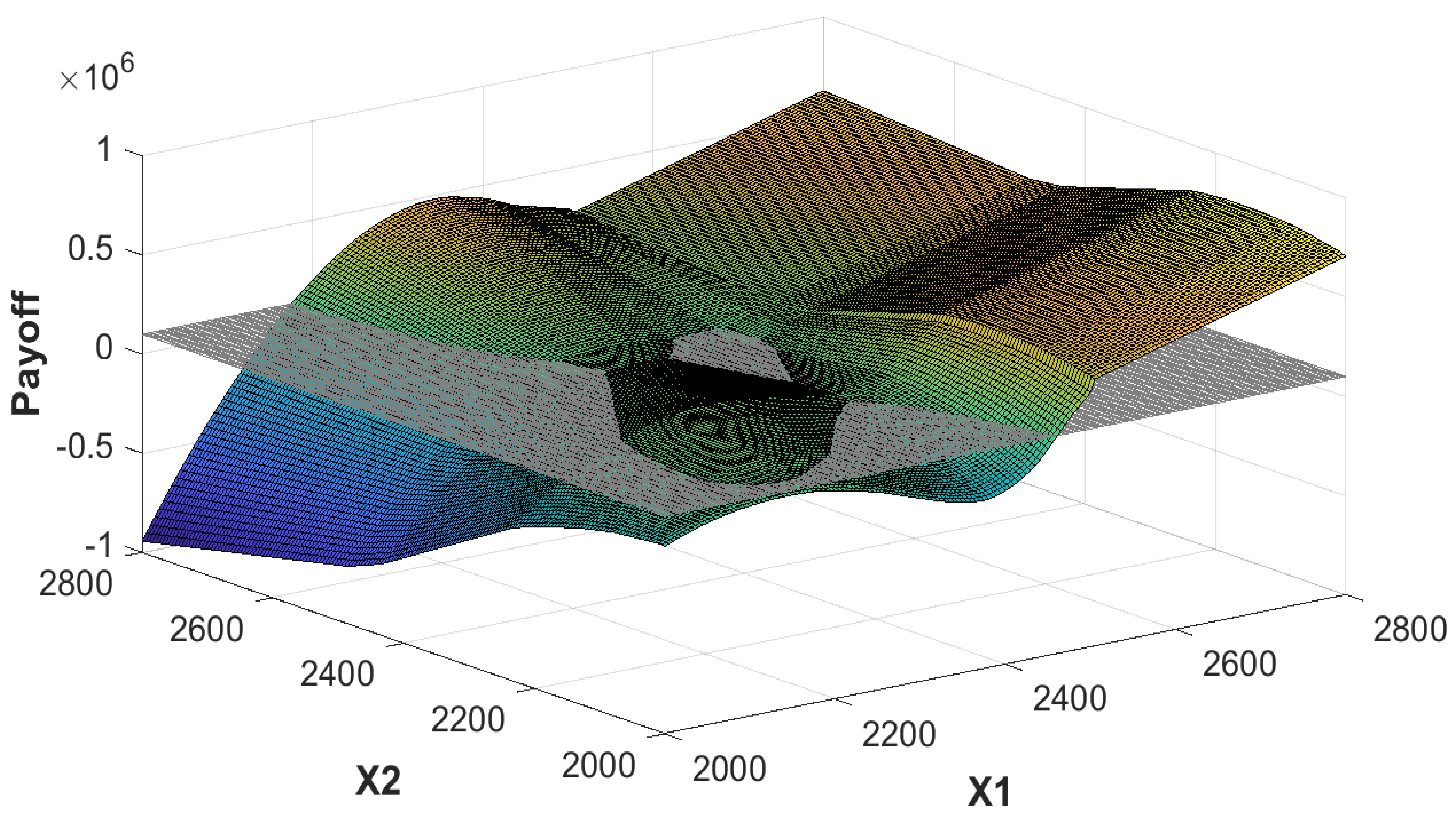}
    \caption{The payout of the optimized portfolio by semi-static optimization as a function of $X_1$ and $X_2$ with 0.1\% transaction cost. The grey horizontal plane represents the initial wealth.}\label{fig:optPayoffBaseSSPSpread}
\end{figure}
To examine the effects of the transaction cost further, we study the payouts of the optimized portfolios for varying levels of the transaction costs. The left plot of Figure \ref{fig:optPayoffDifferentSpreadSSP} shows the payout of the optimized portfolio with 0.1\% transaction cost subtracted by the payout of the optimized portfolio with 1\% transaction cost, whereas the right plot shows  the payout of the optimized portfolio with 0.1\% transaction cost subtracted by the payout of the optimized portfolio with 10\% transaction cost. Figure \ref{fig:optPortIndexDifferentSpread} shows the optimal index quantities bought or sold at $t=1$ obtained with different transaction costs as functions of $X_1$. 

We see that, for the index values up to 2500, which is the highest strike among the options available in the market, a higher transaction cost results in payouts that tend to be higher when $X_1$ and $X_2$ are close to each other. As the transaction cost increases, we invest less in the index at $t=1$; see Figure  \ref{fig:optPortIndexDifferentSpread}.


\begin{figure}[H]
  \centering
    \includegraphics[width=0.85\textwidth,height=0.32\textwidth]{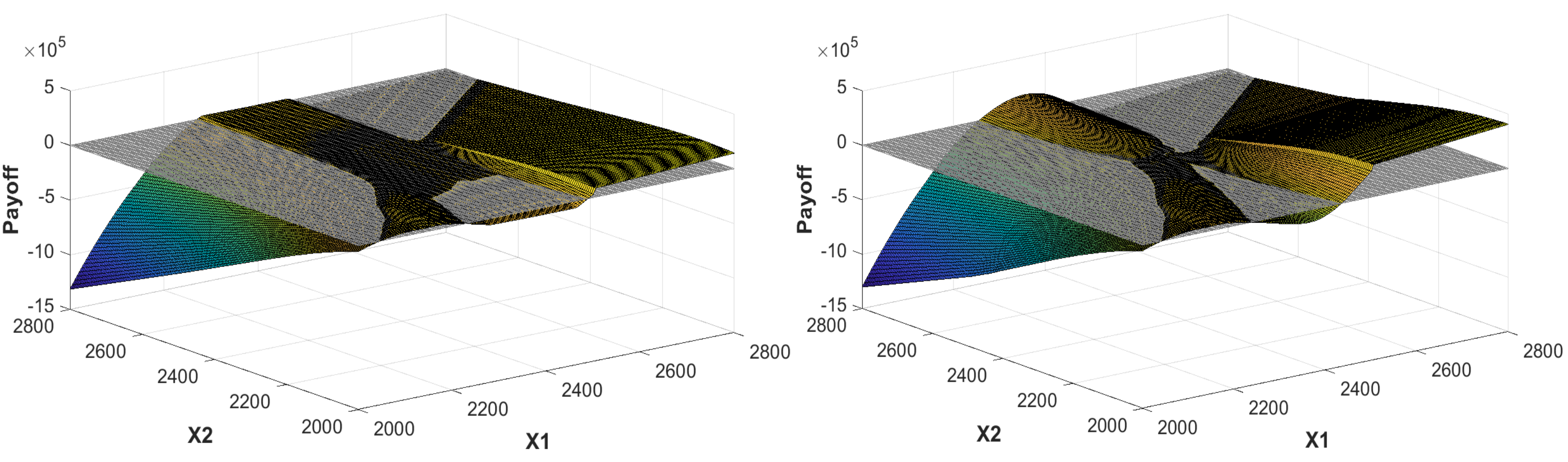}
    \caption{The payout of the optimized portfolio with 0.1\% transaction cost subtracted by 1\% transaction cost\rq s (left). The payout of the optimized portfolio with 0.1\% transaction cost subtracted by 10\% transaction cost\rq s (right). }\label{fig:optPayoffDifferentSpreadSSP}
\end{figure}
\begin{figure}[H]
  \centering
    \includegraphics[width=0.5\textwidth,height=0.32\textwidth]{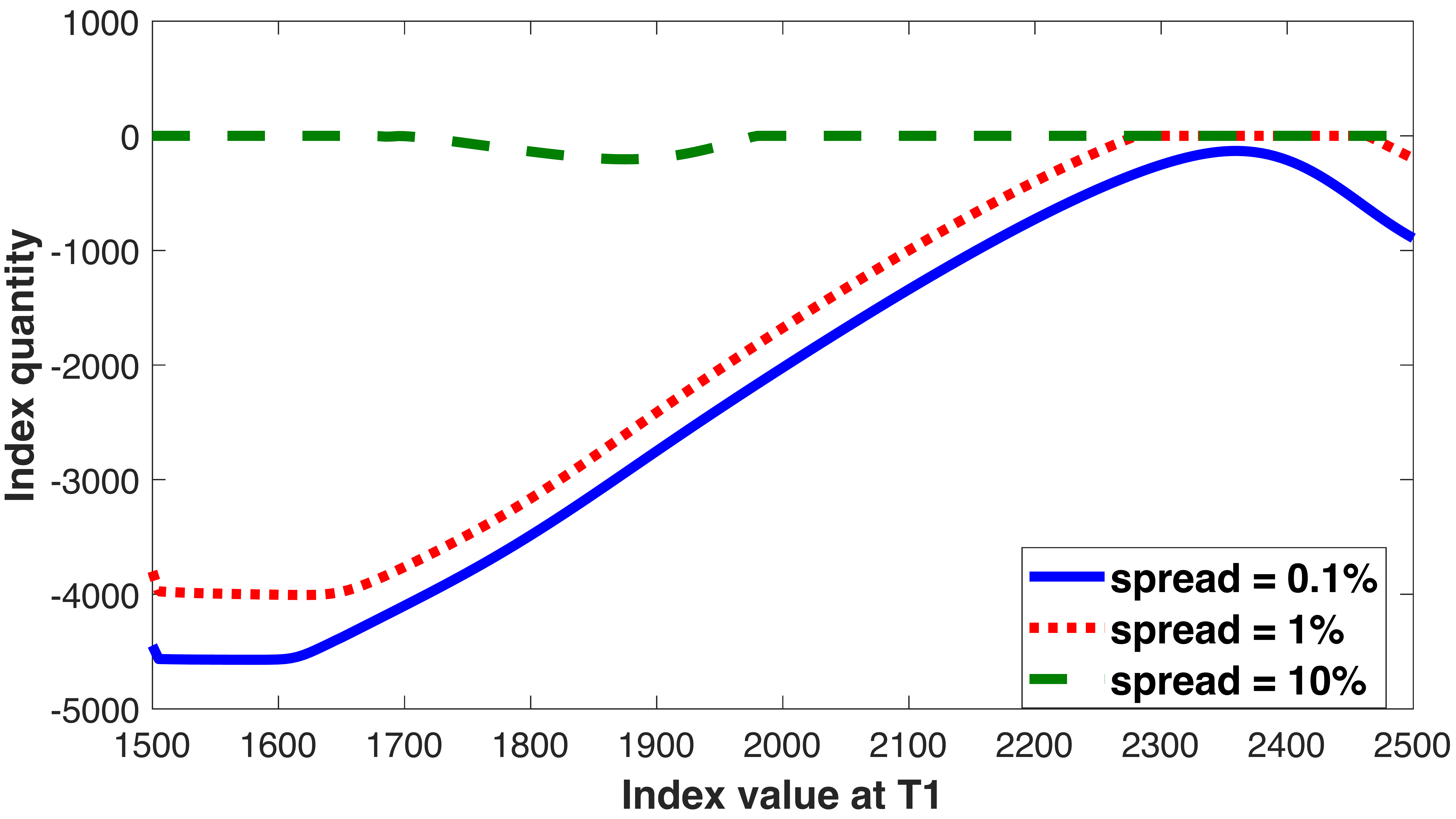}
    \caption{The optimal index quantities bought or sold at $t=1$ by semi-static optimization as functions of $X_1$ with different transaction costs}\label{fig:optPortIndexDifferentSpread}
\end{figure}

Table~\ref{table:disutilitiesSSP-spread} shows how the expected loss function value increases with the transaction cost.

\begin{table}[h!]
  \centering
  \resizebox{0.75\textwidth}{!}{  
  \begin{tabular}{l|c|c|c|c}
\hline
  transaction cost& 0\%& 0.1\%& 1\% &10\%\\
\hline
   log objective& -2.4404 &  -2.3845 & -2.3559 & -2.3136 \\
\hline
   \end{tabular}
   }
  \captionsetup{width=1\textwidth}
  \caption{Logarithms of the optimum objective values in semi-static optimization with different transaction costs}
  \label{table:disutilitiesSSP-spread}
\end{table}

\subsection{Dynamic trading without options}

To illustrate the benefits of employing buy-and-hold strategies in the quoted options, we will compare the results with a purely dynamic optimization model where we are not allowed to trade the options. Other than that, the model is identical with the ones studied above. The numerical optimization is done with the Galerkin discretization and quadrature approximations as described in Section~\ref{sec:numopt}.

Figure~\ref{fig:optimizationVG} plots the mark-to-market values of the optimal index holding at time $t=1$ as functions of the underlying $X_1$ for varying levels of transaction costs $\delta$. When there are no transaction costs, the amount of wealth invested in the underlying does not depend on the value of the underlying except for the more extreme values. This is in line with the theory which says that with exponential utility, the amount of wealth invested in the risky assets is constant. The deviations at the extremes are due to discretization errors. The  corresponding terminal wealth of the optimal index position as a function of $X_1$ and $X_2$ when there are no transaction costs is given in Figure \ref{fig:optDynamicsPayoff}. With higher transaction costs, the terminal wealth is lower as one would expect, and with $0.20\%$ transaction cost, the terminal wealth is constant as there is no trading at all. 

\begin{figure}[H]
  \centering
    \includegraphics[width=0.65\textwidth,height=0.42\textwidth]{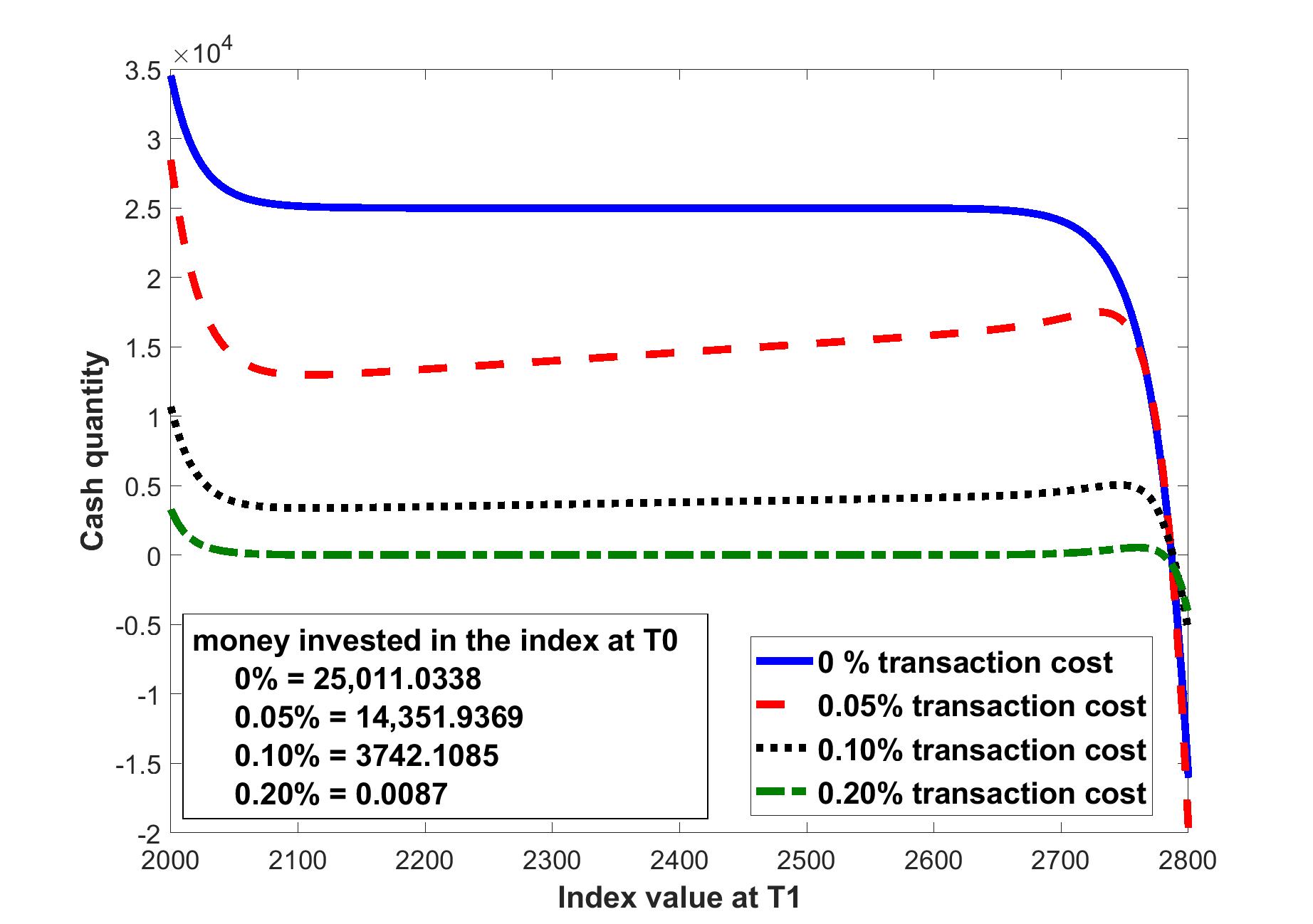}
    \caption{Mark-to-market value of the index at time $t=1$ as a function of $X_1$ obtained by dynamic optimization with different transaction costs }\label{fig:optimizationVG}
\end{figure}

\begin{figure}[H]
  \centering
    \includegraphics[width=0.5\textwidth,height=0.32\textwidth]{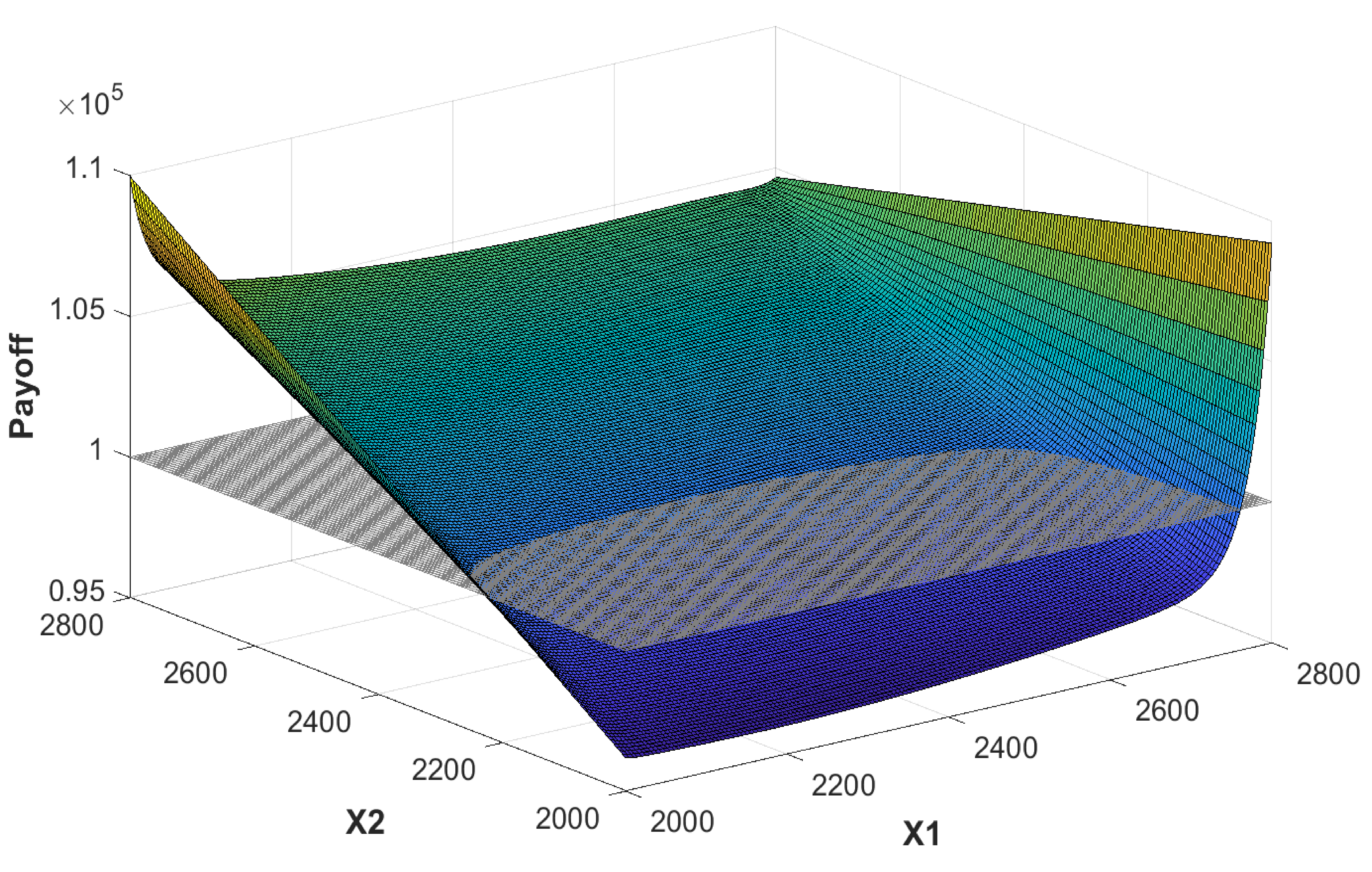}
    \caption{The payout of the optimized portfolio by purely dynamic optimization when there are no transaction costs. The grey horizontal plane represents the initial wealth.}\label{fig:optDynamicsPayoff}
\end{figure}

\section{Indifference pricing of path-dependent options}\label{sec:idpnum}

To illustrate numerical indifference pricing, we will consider a standard call option and four path-dependent options namely, a knock-out call option with payoff 
\begin{equation}
c(X_{1},X_{2})=
\begin{cases}
  (X_{2}-K)^+ & \text{if $X_{1}< B$},\\
  0 & \text{if $X_{1}\ge B$,}
\end{cases}
\end{equation}
an Asian call option with payoff
\begin{equation}
c(X_{1},X_{2})=\bigg(\frac{X_{1}+X_{2}}{2}-K\bigg)^+,
\end{equation}
a look-back call option with payoff 
\begin{equation}
c(X_{1},X_{2})=\max_{t=1,2} \{(X_{t}-K)^+\},
\end{equation}
and a look-back digital call option with payoff
\begin{equation}
c(X_{1},X_{2})=
\begin{cases}
  10 & \text{if $X_{1}\ \text{or}\ X_{2} \ge K $},\\
  0 & \text{otherwise,}
\end{cases}
\end{equation}
all with strike $K=2,350$ and barrier $B=2,400$. The contract size for all options is 100. 


Tables~\ref{table:indiffPrices}--\ref{table:indiffPricesDynamics} give the indifference prices for buying and selling as well as the super- and subhedging costs for the above options. The prices in Table~\ref{table:indiffPrices} were obtained with semi-static hedging with perfectly liquid underlying while those in Table~\ref{table:indiffPricesTransactionCosts} were obtained with transaction cost of $\delta=0.1\%$. Table~\ref{table:indiffPricesDynamics} gives the prices obtained without statically held options. To make the pricing of the call option nontrivial, the call option with strike 2,350 is taken out from the hedging instruments when being priced. The super- and subhedging costs are computed on the interval $[1000,3000]$ instead of all positive real values. This is due to the fact that some options cannot be super- or subhedged for all possible real values resulting in super- and subhedging  costs being infinity.

As expected, the indifference prices for the vanilla call option are more expensive than the knock-out option but cheaper than the look-back call option. The indifference prices for the Asian call option are cheaper than the vanilla call option. This is not surprising as $X_{2}$ is likely to deviate from $X_{0}$ more than from $X_{1}$. The indifference prices for the look-back call option are between 0 and 10 but closer to 10 as it is in the money.

As reported in Section~\ref{sec:arb}, there is arbitrage opportunity in the semi-static model without transaction costs. Accordingly, the superhedging costs are below the subhedging costs. However, the existence of the arbitrage does not prohibit us from computing sensible indifference prices. One should note that in the presence of arbitrage, the quantity constraints for the options are binding so Theorem~\ref{thm:ab} does not apply in the present situation. Adding a 0.1\% transaction cost on the underlying removes the arbitrage and puts us back in the setting of Theorem~\ref{thm:ab} in terms of the order of the four prices; see Table~\ref{table:indiffPricesTransactionCosts}.

As expected, adding transaction costs increases superhedging costs and lowers the subhedging costs. Removing the statically traded options has a similar effect. This is simply because the construction of a superhedging strategy becomes cheaper when trading costs are reduced. The same does not apply to indifference pricing because both sides of the indifference inequality increase when trading becomes more expensive.

Without the statically traded options, the true superhedging cost is $+\infty$ for all but the digital option. Accordingly, the numerically computed superhedging costs in Table~\ref{table:indiffPricesDynamics} would converge to infinity when the scenario grid is extended further. Similarly, the true subhedging costs of all but the call and Asian option are zero.

\begin{table}[H]
  \centering
  \resizebox{0.9\textwidth}{!}{  
    \begin{tabular}{|l|c|c|c|c|} 
      \hline
      claim & subhedging& buying price & selling price&superhedging\\ 
      \hline
      call   &52.9626&  45.3296& 45.3939&   37.4974\\ 
      knock-out call &18.1167&22.4763&22.7125&18.6974\\ 
      Asian  &38.9026& 35.0562& 35.1019& 29.8066 \\ 
      look-back call &53.6604& 53.9293&54.0110& 51.5058\\ 
      look-back digital &14.4026&7.6834& 7.6966& 0.6058\\ 
      \hline
    \end{tabular}
  }
  \captionsetup{width=1\textwidth}
  \caption{Indifference prices, together with super- and subhedging costs by semi-static hedging without transaction costs on the underlying}
  \label{table:indiffPrices}
\end{table}
\begin{table}[H]
  \centering
  \resizebox{0.9\textwidth}{!}{  
    \begin{tabular}{|l|c|c|c|c|} 
      \hline
      claim & subhedging& buying price & selling price& superhedging\\ 
      \hline
      call & 43.4250  &  44.5308& 44.8265&45.8000   \\ 
      knock-out call&4.8957 &20.6770&21.1444&29.5397\\ 
      Asian &29.8327 & 35.0303& 35.2427 & 38.9201\\ 
      look-back call & 43.9763& 53.7226&54.0400& 61.1320 \\ 
      look-back digital&5.2640 &7.5498& 7.5663 &9.2374\\ 
      \hline
    \end{tabular}
  }
  \captionsetup{width=1\textwidth}
  \caption{Indifference prices, together with super- and subhedging costs  by semi-static hedging with  0.1\% transaction cost on the underlying}
  \label{table:indiffPricesTransactionCosts}
\end{table}
\begin{table}[H]
  \centering
  \resizebox{0.9\textwidth}{!}{  
    \begin{tabular}{|l|c|c|c|c|} 
      \hline
      claim & subhedging& buying price & selling price& superhedging\\ 
      \hline
      call & 10.0000 & 49.9490 & 51.2605 & 442.0000  \\ 
      knock-out call& 0.0000&15.3326&16.5480&442.0000 \\ 
      Asian & 0.0000& 41.1187& 42.1857& 442.0000\\ 
      look-back call & 10.0000& 60.4879 &62.3530& 485.2906  \\ 
      look-back digital &0.1538 &6.4321& 6.4469  &10.0000\\ 
      \hline
    \end{tabular}
  }
  \captionsetup{width=1\textwidth}
  \caption{Indifference prices, together with super- and subhedging costs by two-period dynamic hedging without statically held options}
  \label{table:indiffPricesDynamics}
\end{table}

The hedging portfolios, which are $x-\bar{x}$ and $z-\bar{z}$ where $\bar{x}$ and $x$ are options portfolios, and  $\bar{z}$ and $z$ are  index quantities before and after selling the options, as well as their payouts for each hedging will be shown in the later subsections.

\subsection{Optimal hedges}\label{sssec:ssliquid}

Figure~\ref{fig:callSSP} illustrates the hedging strategy for selling one contract of the call option strike $K=2,350$. This includes the hedging portfolio, as well as its payout plotted together with the payoff (grids) of the call option. The payouts of the hedging portfolios for the other options are shown in Figure~\ref{fig:payoffHedgingSemiStatic}. We see that the path-dependent exotic options can be hedged reasonably well especially for the scenarios with higher probability of occurring.

\begin{figure}[H]
  \centering
    \includegraphics[width=0.85\textwidth,height=0.57\textwidth]{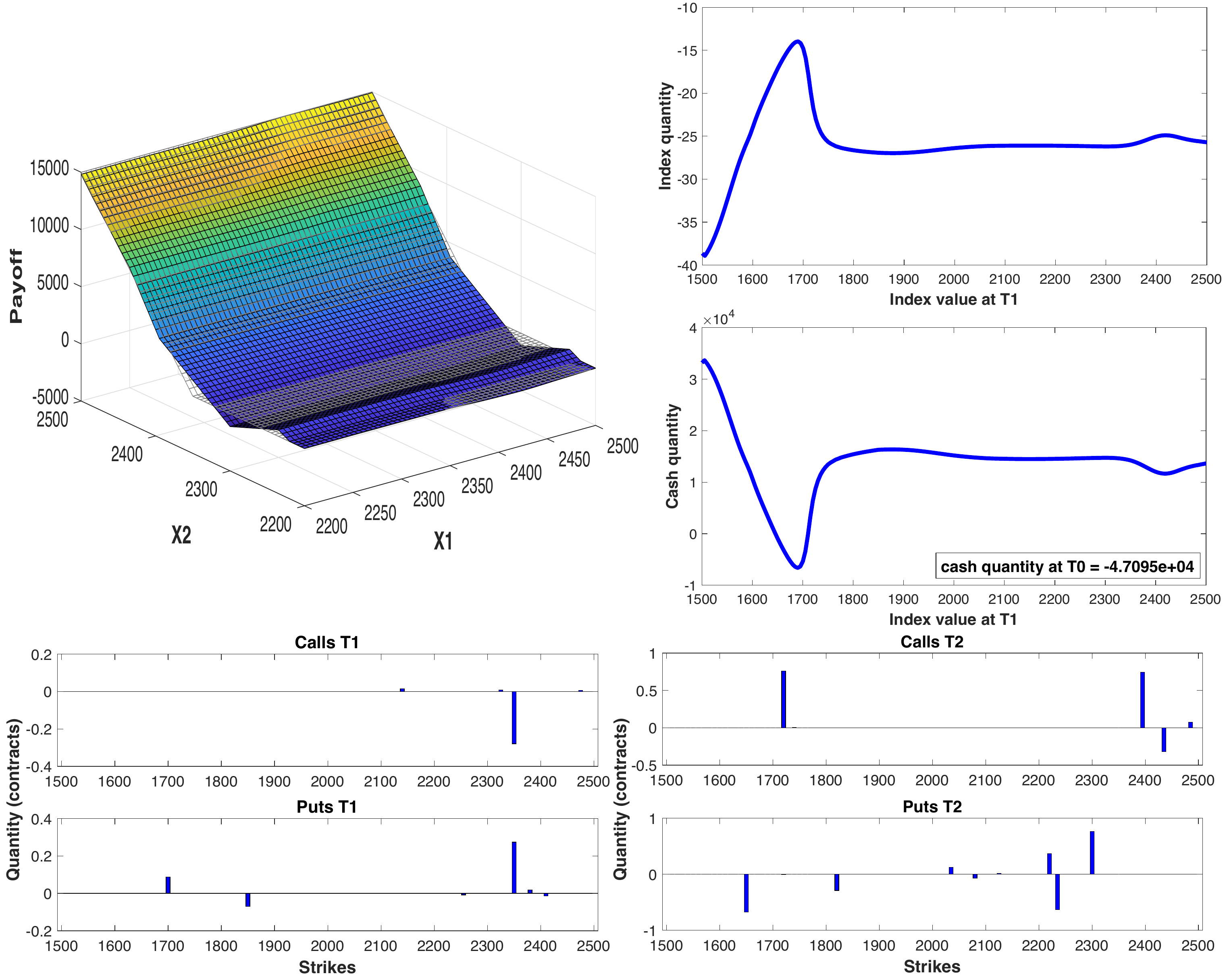}
    \caption{The hedging portfolio for selling one contract of a call option with strike 2,350.}\label{fig:callSSP}
\end{figure}

\begin{figure}[H]
  \centering
    \includegraphics[width=0.85\textwidth,height=0.57\textwidth]{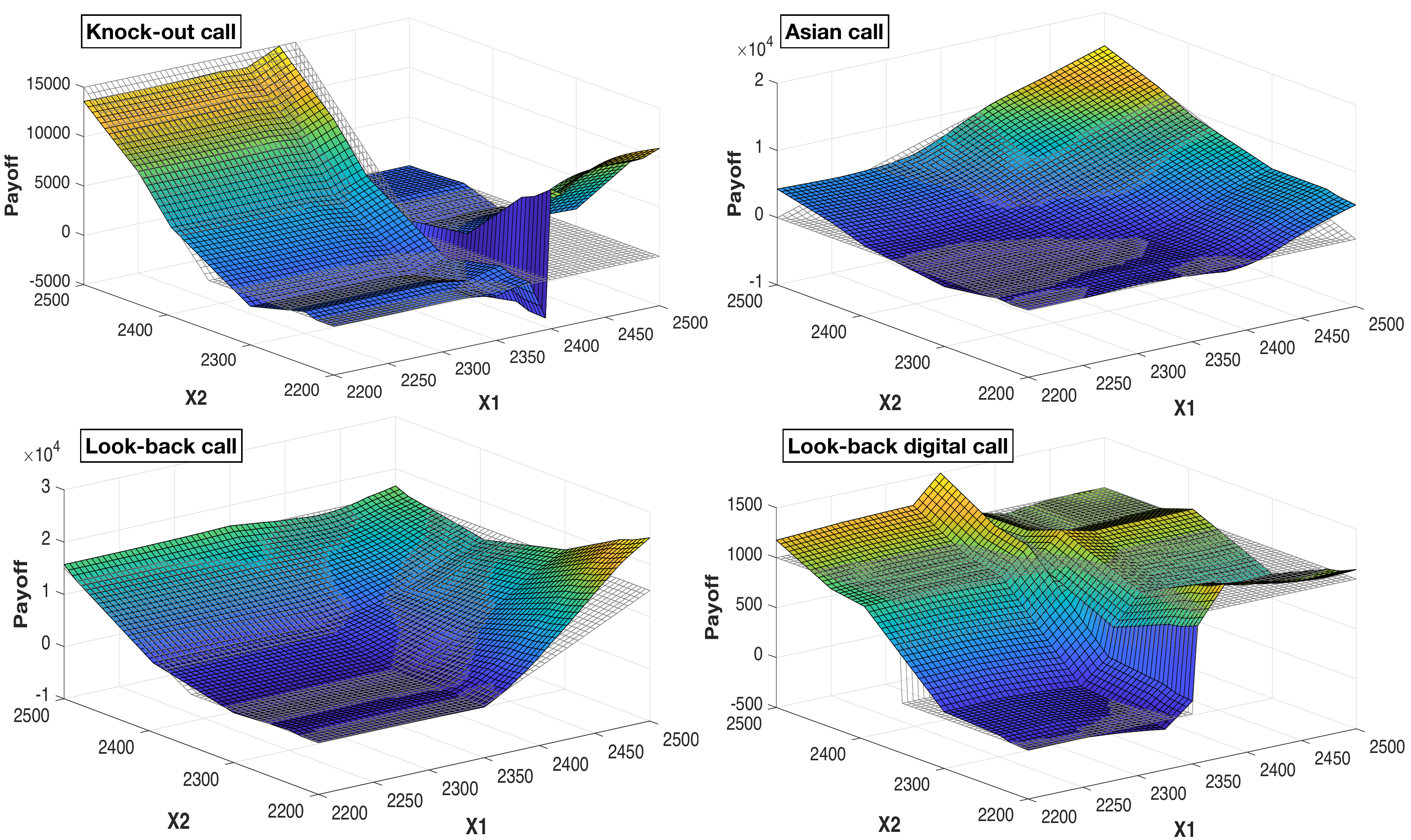}
    \caption{The payouts of the hedging portfolios for selling one contract of a knockout call option, Asian call option, look-back call option, and look-back digital option all with strike 2,350 and barrier 2,400.}\label{fig:payoffHedgingSemiStatic}
\end{figure}

\subsection{Optimal hedges with transaction costs}\label{sssec:ssill}

Figure~\ref{fig:callSSP-spread01} illustrates the hedging strategy for selling one contract of the call option strike $K=2,350$ when the underlying is subject to 0.1\% transaction cost. The payouts of the hedging portfolios for the other options are shown in Figure~\ref{fig:payoffHedgingSemiStatic}. 

\begin{figure}[H]
  \centering
    \includegraphics[width=0.85\textwidth,height=0.57\textwidth]{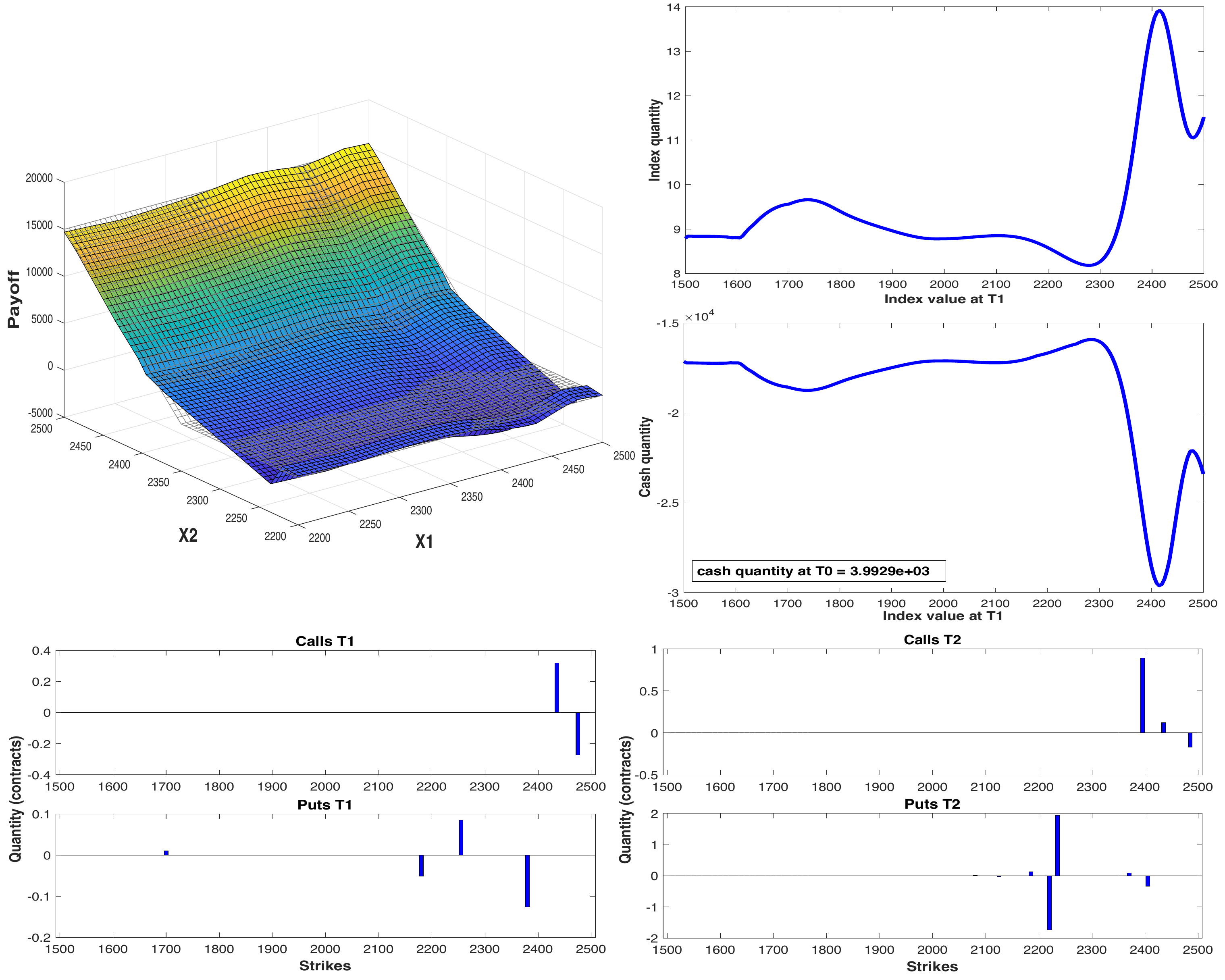}
    \caption{The hedging portfolio for selling one contract of a call option with strike 2,350 with 0.01\% transaction cost.}\label{fig:callSSP-spread01}
\end{figure}

\begin{figure}[H]
  \centering
    \includegraphics[width=0.85\textwidth,height=0.57\textwidth]{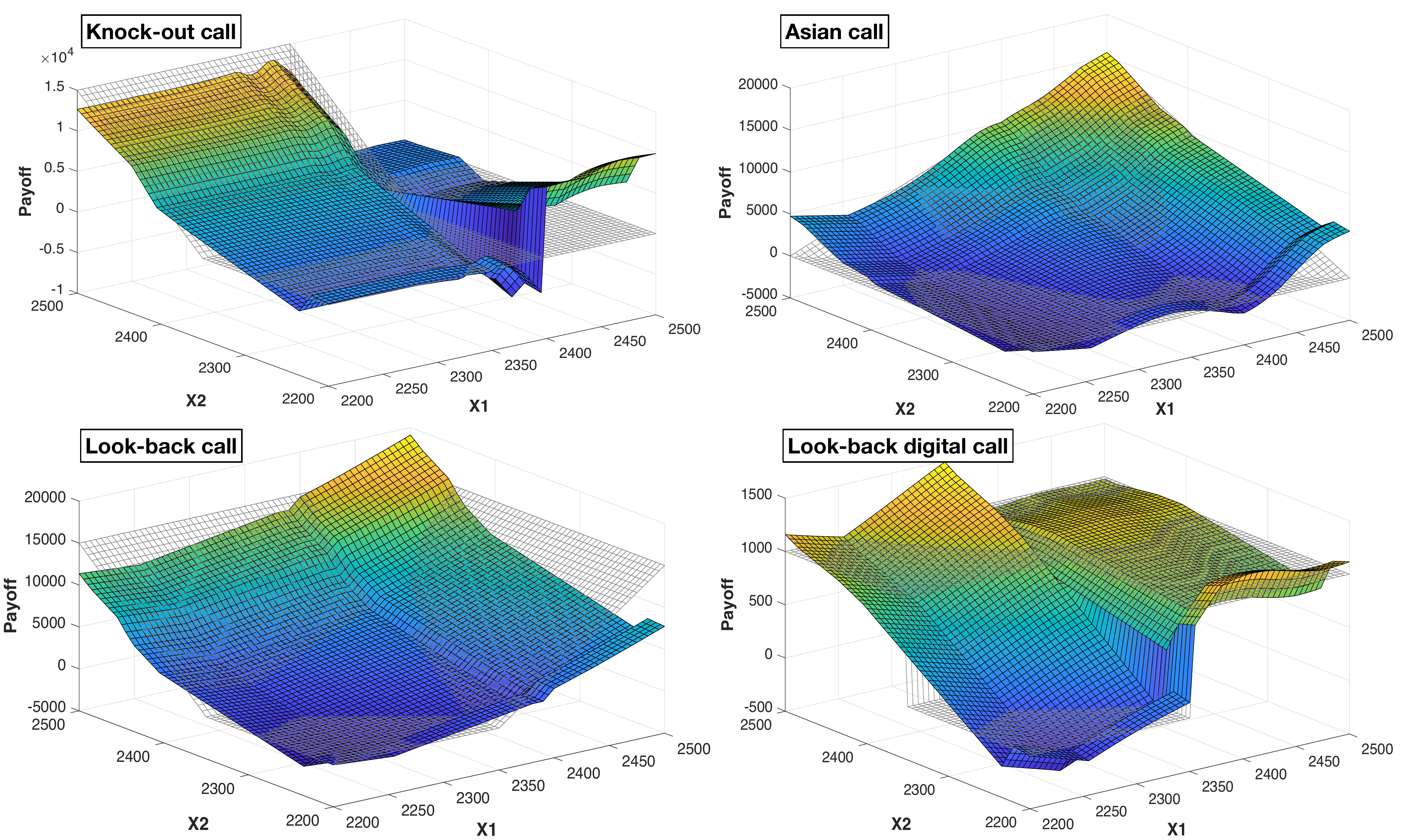}
    \caption{The payouts of the hedging portfolios for selling one contract of a knockout call option, Asian call option, look-back call option, and look-back digital option all with strike 2,350 with 0.01\% transaction cost.}\label{fig:payoffHedgingSSP-spread01}
\end{figure}

Despite having the transaction cost, the path-dependent options are still hedged well. However, the lower the transaction cost, the better the hedge as we can see from Figure \ref{fig:lookBackCallDifferentTransaction} which shows the payouts of the hedging portfolios for selling the look-back call option with strike 2,350 when the transaction costs are 1 and 10 percents. The semi-static hedging with the 10\% transaction cost coincides with the static hedging as the underlying is not traded. Note that static hedging is a special case of semi-static hedging. We see that the 2-dimensional shapes of the payout are identical for any given values of $X_1$ or $X_2$




\begin{figure}[H]
  \centering
    \includegraphics[width=0.85\textwidth,height=0.32\textwidth]{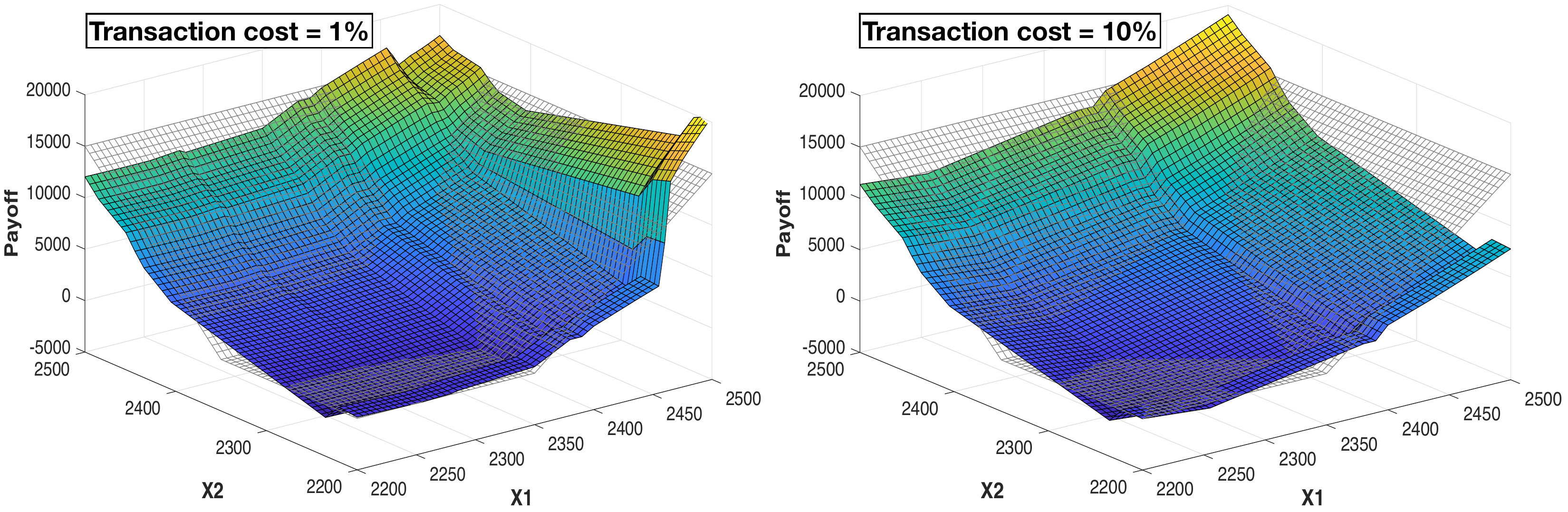}
    \caption{The payouts of hedging portfolios for selling one contract of a lookback call option with strike 2350 with 1\% and 10\% transaction costs.}\label{fig:lookBackCallDifferentTransaction}
\end{figure}

%
%
%
%
%

\subsection{Optimal hedging without options}\label{sssec:dyn}

Figure~\ref{fig:callDynamics} illustrates the hedging strategy for selling one contract of the call option strike $K=2,350$ by two-period dynamic hedging. The payouts of the hedging portfolios for the other options are shown in Figure~\ref{fig:pricingDynamic}. Only cash and the underlying, allowed to be traded without transaction costs at $t=0,1$, are the hedging instruments. We see that, without the call and put options as the hedging instruments, they badly hedge the options. However, hedging portfolios tend to be more profitable when $X_1$ and $X_2$ are close to each other which is an area with higher probability of occurring.

\begin{figure}[H]
  \centering
    \includegraphics[width=0.85\textwidth,height=0.32\textwidth]{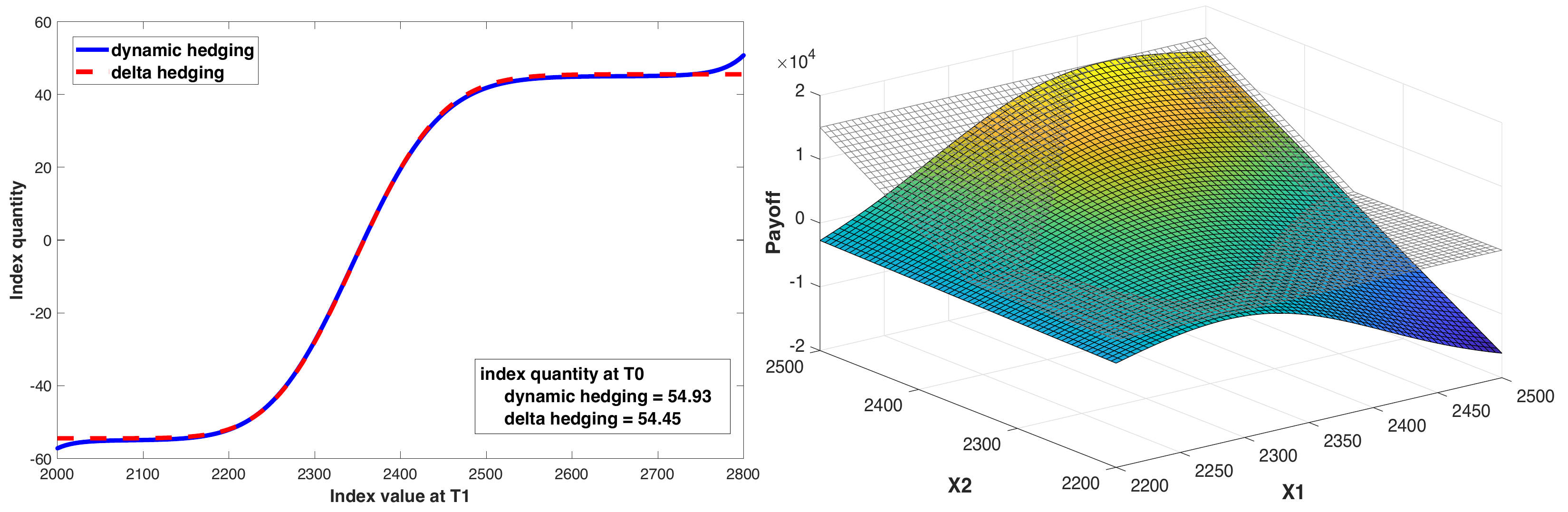}
    \caption{The quantity of the index bought at $t=1$ as a function of $X_1$ of the hedging portfolio for selling one contract of a call option with strike 2,350 by dynamic strategy (left) and its payout (right).}\label{fig:callDynamics}
\end{figure}

\begin{figure}[H]
  \centering
    \includegraphics[width=0.85\textwidth,height=0.57\textwidth]{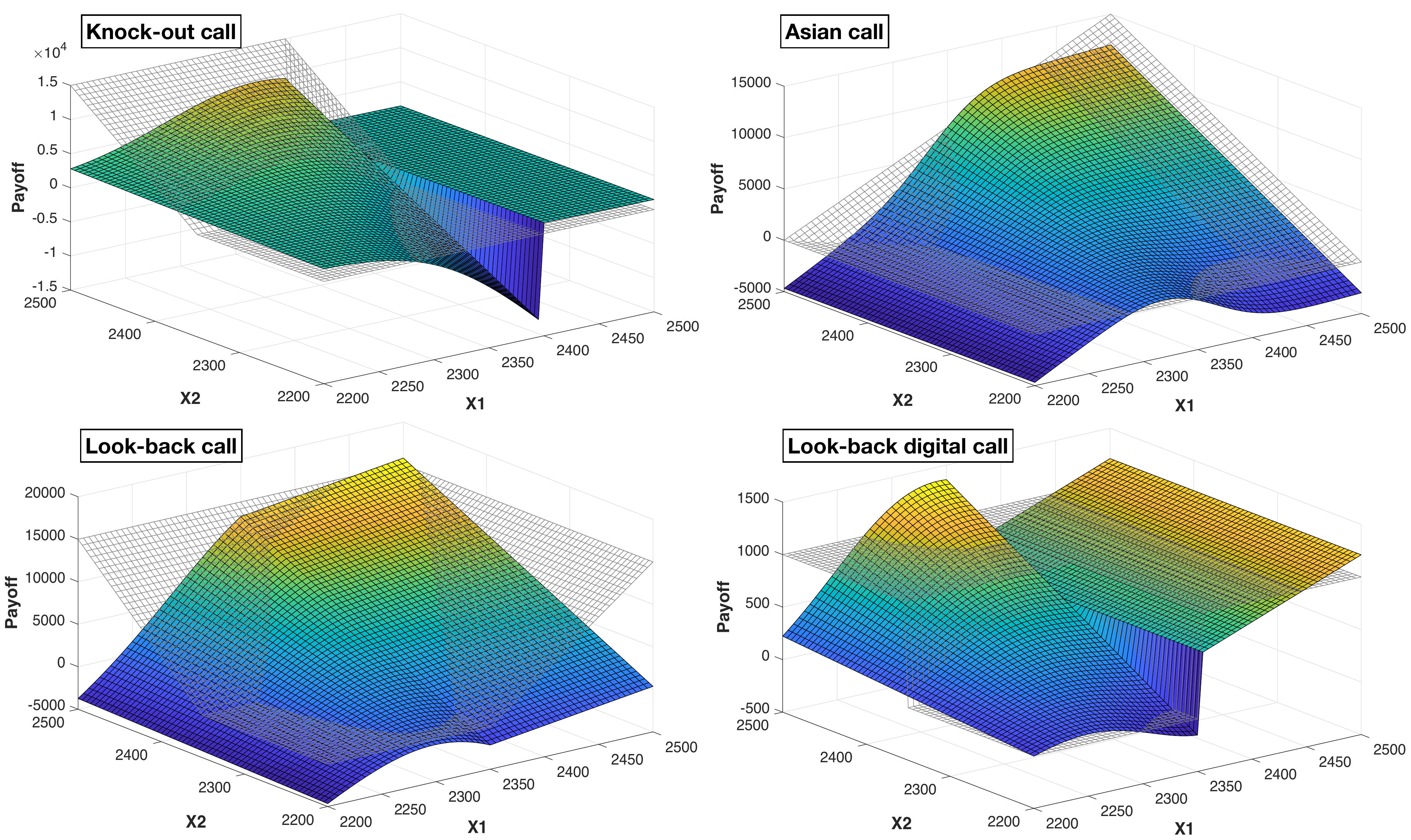}
    \caption{The payouts of the hedging portfolios for selling one contract of a knockout call option, Asian call option, look-back call option, and look-back digital option all with strike 2,350 and barrier 2,400 by dynamic hedging.}\label{fig:pricingDynamic}
\end{figure}

The quantity of trade in the index as shown in Figure \ref{fig:callDynamics} looks similar to the one of the delta hedging. This is very surprising because the dynamic hedging is implemented in a two-period setting, whereas the delta hedging is a continuous trading strategy. The indifference prices for buying and selling are 49.9490 and 51.2605, whereas the Black-Scholes price with $\mu=0$ and $\sigma=0.1206$ is 50.7242. Indifference prices depend on an agent's initial position, risk preference and market liquidity, while the Black-Scholes price does not.

Figure \ref{fig:deltaVGSpreadComparison} shows the quantities of the index traded at $t=1$ of the hedging portfolios for selling one contract of a call option with strike 2,350 as  functions of $X_1$ obtained by two-period dynamic hedging with different transaction costs. We see that the quantities traded in the underlying at both $t=0$ and $t=1$ decrease as the transaction cost increases. Except at the tails, which have low probability of occurring, the strategies are to buy some underlying at the beginning and buy more if it goes up or sell if it goes down. However, we see that, for the $0.2\%$ transaction cost, it is optimal to do nothing at $t=1$ if the index value at $t=1$ is below 2400.

\begin{figure}[H]
\centering
    \includegraphics[width=0.5\textwidth,height=0.32\textwidth]{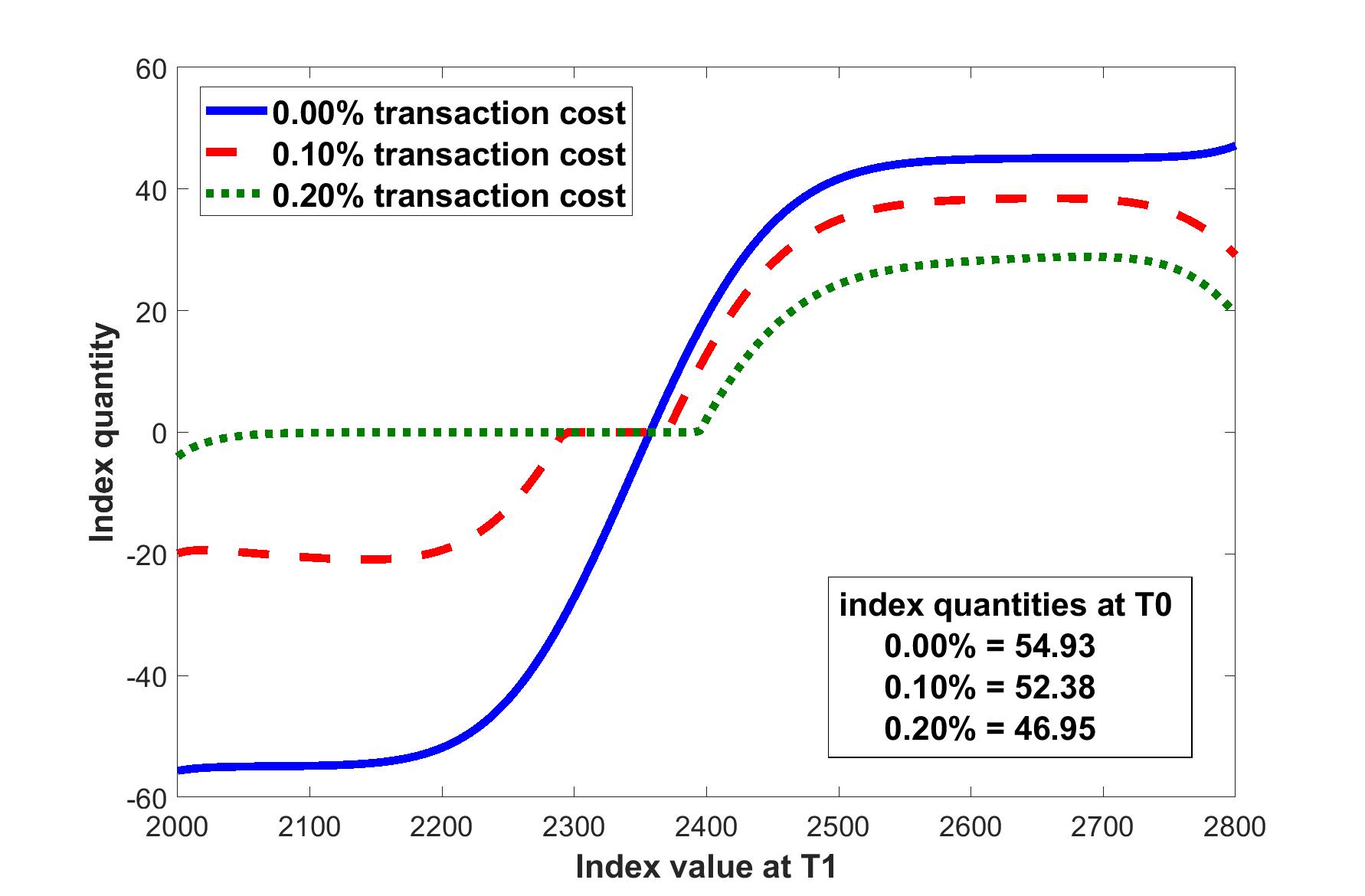}
  \caption{The quantities of the index traded at $t=1$ of the hedging portfolios for selling one contract of a call option with strike 2,350 as functions of $X_1$ by two-period dynamic hedging with different transaction costs.}\label{fig:deltaVGSpreadComparison}
\end{figure}

\bibliographystyle{alpha}
\bibliography{sp}

\end{document}